\tikzstyle{none}=[inner sep=0pt]
\tikzstyle{simple}=[-,draw=Black,line width=2.000]
\newtheorem{theo}{Theorem}%[subsection]
\newtheorem{thm}[theo]{Theorem}
\newtheorem{prop}[theo]{Proposition}
\newtheorem{defn}[theo]{Definition}
\newcommand{\cH}{\mathcal{H}}
\newcommand{\id}{\mathbb{I}}
\newcommand{\tr}[2]{\mathrm{tr}_{#2} \left\{ #1 \right\}}
\newcommand{\trace}[1]{\mathrm{tr}\left\{#1 \right\}}
\newcommand{\cQ}{\mathcal{Q}}
\newcommand{\cP}{\mathcal{P}}
\newcommand{\A}{\mathbb{A}}
\newcommand{\X}{\mathbb{X}}
\newcommand{\Ass}[2]{\boldsymbol{\Sigma}_{#1|#2}}
\newcommand{\Np}{\mathrm{N}}
\newcommand{\Y}{\mathbb{Y}}
\newcommand{\Bo}{\mathbb{B}}
\newcommand{\pr}{\text{\textbf{P}}}
\definecolor{orangy}{RGB}{213,94,0}
\newcommand{\blk}{\color{black}}
\begin{document}
	
	\title{On characterising assemblages in Einstein-Podolsky-Rosen scenarios}
	
	\author{Vinicius P.~Rossi}
	\affiliation{International Centre for Theory of Quantum Technologies, University of  Gda{\'n}sk, 80-308 Gda{\'n}sk, Poland}
	\email{vinicius.prettirossi@phdstud.ug.edu.pl}
	\author{Matty J.~Hoban}
	\affiliation{Cambridge Quantum Computing Ltd}
	\affiliation{Quantinuum LLC}
	\affiliation{Department of Computing, Goldsmiths, University of London, New Cross, London SE14 6NW, United Kingdom}
	\author{Ana Bel\'en Sainz}
	\affiliation{International Centre for Theory of Quantum Technologies, University of  Gda{\'n}sk, 80-308 Gda{\'n}sk, Poland}

	\date{\today}

	\begin{abstract}
		Characterising non-classical quantum phenomena is crucial not only from a fundamental perspective, but also to better understand its capabilities for information processing and communication tasks. In this work, we focus on exploring the characterisation of Einstein-Podolsky-Rosen inference (a.k.a.~steering): a signature of non-classicality manifested when one or more parties in a Bell scenario have their systems and measurements described by quantum theory, rather than being treated as black boxes. We propose a way of characterising common-cause assemblages from the correlations that arise when the trusted party performs tomographically-complete measurements on their share of the experiment, and discuss the advantages and challenges of this approach. Within this framework, we show that so-called almost quantum assemblages satisfy the principle of macroscopic noncontextuality, and demonstrate that a subset of almost quantum correlations recover almost quantum assemblages in this approach.
	\end{abstract}
	
	\maketitle
	
	\tableofcontents
	
	\newpage
	
	\section{Introduction}\label{intro}
	
	\quad Identifying the non-classical aspects of quantum theory is not only an important research theme in the foundations of quantum theory, but it is now of relevance to quantum computation and quantum information. In particular, these non-classical aspects can be seen as a resource that can be used for a quantum advantage in computation \cite{howard,raussendorf,andersbrowne,hobanbrowne} or secure communication \cite{branciard2012one,qkd,vaziranividick}. Of special interest is the study of non-classical resources in Bell scenarios, often termed ``Bell non-locality", which is a resource in protocols for random number generation \cite{pironio, coudron, colbeck}, quantum key distribution \cite{qkd,vaziranividick}, and verifiable, delegated quantum computation \cite{ruv}. Interestingly, the protocols where this non-classicality is a resource are \textit{device-independent}, which means that (possibly) quantum devices can be treated as multi-party \textit{black boxes} that have classical inputs and outputs: one does not need to specify \textit{a priori} the quantum description of the devices inside the black box. In the nomenclature of Ref.~\cite{cowpie}, these resources can be termed \textit{common-cause boxes}, which comes from the structure of a Bell experiment \cite{Bell64}. 
	
	Importantly, when considering resources, one must understand their limitations as well as their possibilities for enhancing technological performance. The characterisation of the set of common-cause boxes possible in quantum theory is useful for understanding the scope of device-independent information processing since it can, for instance, limit the ability of an eavesdropper to predict the output of a black box and thus lead to private randomness generation \cite{pironio}. One important example where we see the limitations of quantum systems is in the Tsirelson bound \cite{cirel1980quantum}, which dictates the largest violation possible with quantum common-cause boxes of the Clauser-Horne-Shimony-Holt (CHSH) inequality \cite{chsh}. The impact of this limitation includes that Alice and Bob cannot use quantum common-cause boxes to enhance communication so to transmit arbitrary messages with only one bit of classical communication \cite{van2013implausible}. In turn, quantum common-cause boxes that give this maximal violation can be used for all of the kinds of the aforementioned device-independent protocols. 
	
	Curiously, the Tsirelson bound is not the largest numerical violation possible, and such \textit{post-quantum} violations can be achieved with common-cause boxes that respect the no-signalling principle: inputs for one party do not influence the statistics of another party's output. Popescu and Rohrlich used this example (now called the PR box) to point out that for these boxes, special relativity is not enough to constrain one to the set of quantum boxes, and asked what axioms or principles could single out this set \cite{popescu1994quantum}. Answering such a question would give us deep insight into why quantum theory is one of our ultimate theories, and explain its limitations for information processing.
	
	Subsequent to the work of Popescu and Rohrlich, van Dam described an \textit{information theoretic} principle that ruled out the PR box: the principle of Non-Trivial Communication Complexity \cite{van2013implausible}. Despite follow-up work by Brassard \textit{et al} \cite{Brassard06} it was unclear if this principle could recover quantum common-cause boxes. Significant progress on this question was made by the seminal work of Navascu\'{e}s, Pironio and Ac\'{i}n (NPA) \cite{NPA07,NPA09,PNA10}, whom developed a hierarchy of semi-definite programs that converge to a set of quantum common-cause boxes\footnote{Technically, the hierarchy converges to the set of common-cause boxes realised by commuting measurements on a quantum state, which, for infinite dimensional systems, is distinct from boxes produced by local measurements defined according to a tensor product. For finite dimensional quantum systems, the two definitions are equivalent.}.
	
	The NPA hierarchy is an extremely useful tool, but is not described in terms of either information-theoretic or physical principles. As a result, many principles followed such as Information Causality \cite{IC}, Macroscopic Locality \cite{ML}, Local Orthogonality \cite{LO,LO2}, and No Common Certainty of Disagreement \cite{contreras2021observers}. An obstacle to the development of such principles is the almost-quantum set of common-cause boxes \cite{AQ}, which have not yet violated any of the proposed principles.
	
	Independently of the study of common-cause boxes, another line of research in this topic pertains to the characterisation of not only common-cause boxes, but of the whole of quantum theory.
	Unlike for common-case boxes, various axiomatisations of (finite-dimensional) quantum theory have been successfully developed \cite{hardy2001quantum,clifton2003characterizing,goyal2008information,chiribella2020probabilistic,hardy2011reformulating,masanes2011derivation,chiribella2011informational,dakic2011quantum,hardy2012operator,hardy2013formalism,budiyono2017quantum,hohn2017quantum,Hohn2017toolbox,Selby2021reconstructing,wilce2018royal,Wetering2019effecttheoretic,nakahira2020derivation,tull2020categorical}, especially in the study of \textit{generalised probabilistic theories} \cite{barrett2007gpt} or \textit{process theories} \cite{coecke_kissinger_2017}. Furthermore, tentative progress has been made in understanding the information processing power of quantum theory within such a broad framework \cite{barrett2007gpt, barrett2019computational}. The starting point for these frameworks requires one, in some sense, to dictate the degrees of freedom in a general, possibly non-quantum, experiment from the outset so it can be associated with some vector space. This approach is thus very different from the device-independent approach as it requires a characterisation of the systems involved. As a result, this approach has not led to deep intuitions about the strengths or limitations of quantum phenomena such as non-classical common-cause boxes.
	
	These two approaches to characterising quantum systems within a more general framework -- a.k.a.~characterising quantum theory ``from the outside" -- are not the only possibilities. In particular, one situation is to start with some elements of quantum theory, and then beyond this to allow something more general. A notable example in this direction is the development of \textit{process matrices} that exhibit indefinite causal order \cite{oreshkov}. Other examples include the study of Bell scenarios assuming only local measurements \cite{barnum,unified,geller2014quantifying} and recovering the measurement postulates of quantum theory from just unitary dynamics \cite{masanes}. In this work, we also pursue a direction of assuming some aspect of quantum theory, in particular we assume that quantum theory holds locally for a single party alone. 
	
	To be more precise, instead of considering Bell scenarios, we consider Einstein-Podolsky-Rosen (EPR) scenarios \cite{schrodinger1936probability,wiseman2007steering}, inspired by their seminal 1935 paper \cite{einstein1935can}. In such a scenario, as mentioned, there is a single party (called Bob) that has a quantum system with known degrees of freedom (with a particular Hilbert space dimension) that can be directly measured; thus the system has a quantum-theoretic description. Outside of Bob's laboratory we can treat everything else as uncharacterised, but satisfying causal constraints. In particular, there are multiple, non-communicating parties that generate classical data. In this way we can see the EPR scenario as a Bell scenario with these extra assumptions made about one particular party. While in Bell scenarios one considers boxes associated with certain correlations, in EPR scenarios the fundamental object is an \textit{assemblage} \cite{pusey2013negativity}. An assemblage is a collection of (sub-normalised) conditional density matrices: a description of the quantum system in Bob's laboratory conditioned on the classical data generated outside of this laboratory. An assemblage provides a rich structure since it can capture both box-like correlations as well as quantum information. 
	
	Non-classical\footnote{In the literature, these assemblages are also referred to as `steerable' assemblages.} assemblages in Einstein-Podolsky-Rosen scenarios\footnote{In the literature, these are also known as `steering' scenarios.} show a particular non-classical aspect of nature that quantum theory features in the lab. In addition, they have been shown useful for enhancing our performance at cryptographic protocols \cite{branciard2012one,law2014quantum,passaro2015optimal,cavalcanti2016quantum}, hence manifesting as useful quantum resources. To capture this non-classical resource one needs to characterise the set of classical assemblages, and much work has been devoted to this \cite{weight,cavalcanti2015detection, cavalcanti2016quantum}. However, the question of `how to characterise non-classical assemblages from the outside' has barely been explored. That is, given a general framework for non-classical assemblages can we single out those with a completely quantum explanation from those that go beyond the quantum formalism? As well as being fundamental, this question gives us another avenue to explore the resourcefulness of non-classical assemblages. In this work we comment on the challenges we face when tackling this question, on a particular way to overcome these, and on the new insights from and limitations of our approach.
	
	\section{Einstein-Podolsky-Rosen scenarios: preliminaries}
	
	\quad Einstein-Podolsky-Rosen (EPR) scenarios correspond to a particular type of experiment where non-classical features of nature manifest, and captures a particular type of non-classical property referred to as \textit{EPR inference}\footnote{In the literature, EPR inference is also known as `EPR steering' or merely `steering'.}. 
	Some features in the specification of an EPR scenario are similar to those which specify Bell scenarios, so the reader familiar with the latter might find a similar discourse below. 
	
	For simplicity, let us first introduce the case of two parties, hereon called Alice and Bob (see Fig.~\ref{fig:EPRbip}). Alice and Bob share a physical system prepared on a state possibly unknown to them, and Alice wants to learn about the system held by Bob in his lab. Alice and Bob are distant from each other and cannot communicate (neither classically nor with quantum systems), hence the only way that Alice can learn about the state of Bob's system is by performing measurements on her share of the system and making inferences about Bob's. After performing a measurement, then, Alice can update her knowledge on the state preparation of Bob's quantum system conditioned on her obtained outcome. In this sense, each measurement choice by Alice selects an ensemble of the possible updated states of Bob's quantum system, and the probability of each of her outcomes yields the probability with which each of those updated states will arise. The collection of these ensembles is called \textit{assemblage} and is the object of interest in the study of EPR scenarios.
	Notice that this discourse is different from the ubiquitous one where EPR scenarios are about non-classical causal influences, the famous `spooky action at a distance'. For a deeper discussion on the advantages of taking this less-common inferential perspective we refer the reader to Ref.~\cite{zjawin2021quantifying}.
	
	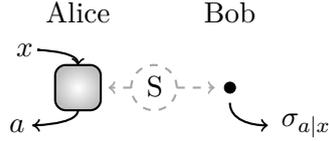
\begin{figure}
		\begin{center}
			\begin{tikzpicture}[scale=1]
				\node (A) at (-1,1) {Alice};
				\shade[draw, thick, ,rounded corners, inner color=white,outer color=gray!50!white] (-0.7,-0.3) rectangle (-1.3,0.3) ;
				\draw[thick, ->] (-1.5,0.5) to [out=180, in=90] (-1,0.3);
				\node at (-1.7,0.5) {$x$};
				\draw[thick, -<] (-1,-0.3) to [out=-90, in=180] (-1.5,-0.5);
				\node at (-1.8,-0.5) {$a$};
				
				\node (B) at (1,1) {Bob};
				\draw[thick, ->] (1,-0.2) to [out=-90, in=180] (1.5,-0.5);
				\node at (2,-0.5) {$\sigma_{a|x}$};
				\node[draw,shape=circle,fill,scale=.4] at (1,0) {};
				
				\node at (0.01,0.02) {S};
				\draw[thick, dashed, color=gray!70!white] (0,0) circle (0.3cm);
				\draw[thick, dashed, color=gray!70!white, ->] (-0.3,0) -- (-0.6,0);
				\draw[thick, dashed, color=gray!70!white, ->] (0.3,0) -- (0.8,0);
			\end{tikzpicture}
		\end{center}
		\caption{\textbf{Bipartite EPR scenario:} Alice refines her knowledge about the state of Bob's quantum system by making inferences from performing measurements on her share of the system. Alice's measurement choices are labelled by the classical variable $x$ that takes values in the set $\X$, and the measurement outcomes are labelled by the classical variable $a$ that takes values in the set $\A$. The possibly subnormalised positive-semidefinite matrix $\sigma_{a|x}$ represents the state of Bob's system when Alice's measurement choice and outcome are $x$ and $a$ respectively. The probability that Alice obtains $a$ when performing measurement $x$ is given by $p(a|x)=\tr{\sigma_{a|x}}{}$. The assemblage (ensemble of ensembles) representing the information on the quantum state of Bob's system is $\Ass{\A}{\X}=\{\{ \sigma_{a|x}\}_{a\in\A}\}_{x\in\X}$.}
		\label{fig:EPRbip}
	\end{figure}
	
	\subsection{Bipartite EPR scenarios}
	
	\quad In a bipartite scenario, we denote by $\X$ the set of classical labels for Alice's measurement choices, and $\A$ the set of classical labels for her measurement outcomes\footnote{In principle, different measurement can have different number of outcomes, but for the current discussion we can take these sets to be the same (and all equal to $\A$) without loss of generality.}. In addition, let us denote by $\cH_B$ the Hilbert space corresponding to Bob's quantum system, which is moreover known to both parties. 
	An assemblage, denoted by $\Ass{\A}{\X}$, is hence given by the collection of (possibly unnormalised) quantum states $\Ass{\A}{\X}=\{\{ \sigma_{a|x}\}_{a\in\A}\}_{x\in\X}$, where $p(a|x)=\tr{\sigma_{a|x}}{}$ is the probability that Alice obtains outcome $a \in \A$ when performing measurement $x \in \X$, and $\rho_{a|x}=\frac{\sigma_{a|x}}{p(a|x)}$ is the normalised quantum state that refines Alice's knowledge on the state of Bob's quantum system. 
	
	An assemblage $\Ass{\A}{\X}$ is said to admit a \textit{quantum realisation} if there exists a Hilbert space $\cH_A$ for Alice, a set of measurements $\{\{M_{a|x}\}_{a\in\A}\}_{x\in\X}$ in $\cH_A$, and a quantum state $\rho$ in $\cH_A \otimes \cH_B$, such that the elements of $\Ass{\A}{\X}$ can be expressed as: 
	\begin{align}
		\sigma_{a|x} = \tr{M_{a|x} \otimes \id_{\cH_B} \, \rho}{A}\quad \forall \, a\in\A\,,\, x\in\X\,.
	\end{align}
	
	\subsection{Assemblages without a quantum realisation}
	
	\quad In this work we are interested in characterising the set of quantum-realisable assemblages by singling them out within a broader class of assemblages that may include non-quantum-realisable ones (known as post-quantum assemblages). Two comments are in order for understanding this question: how may a post-quantum assemblage arise within such a seemingly-quantum-exclusive experiment?, and how are post-quantum assemblages mathematically specified? 
	
	Regarding the first question, there are different ways in which one can capture what is necessarily quantum in this experiment, and what aspects of it may be dictated by a beyond-quantum theory. Informally, one may think of a universe that can be fully fleshed out locally using a quantum description, but that holistically requires something beyond the quantum formalism to be described and understood. Alternatively, one can also think of the constraints of an EPR scenario as just one more way to probe nature, and explore the scope of lessons to learn from it. This latter approach takes on the device-independent approach underpinning a Bell scenario, where even if the parties had access to a quantum system and measurements, the only information that they rely on to assess the properties of nature are the classical labels of the measurement choices and outcomes, and the statistics one may draw from them (no need to assume or use the quantum formalism at all!). So, given that an EPR scenario is defined by $\X$, $\A$, and $\cH_B$, then considerations beyond quantum theory may come in in the nature of the system shared by Alice and Bob. This perspective goes along the lines of a causal grounding of an EPR scenario, and gauges the classicality or quantumness of an assemblage in terms of the common causes (which may be thought of as the shared system) that correlate Alice and Bob's labs. Another possibility comes from the constraint that in an EPR scenario Alice and Bob are distant parties that perform space-like separated local actions and cannot communicate with each other. This also opens the door for some non-quantumness to sneak in in the form of fine-tuned hidden signalling. However, given the conceptual problems that fine-tuning comes with \cite{wood2015lesson} we encourage the reader not to entertain this latter approach. Nevertheless, a perspective provided in terms of imposing an overall no-signalling principle between Alice and Bob allows one to include in the picture assemblages with no quantum realisation \cite{sainz2015postquantum,hoban2018channel,sainz2018formalism,sainz2020bipartite}, in the same way that in Bell scenarios post-quantum correlations (such as the one given by Popescu and Rohrlich \cite{popescu1994quantum}) are encompassed. 
	
	Regardless of which fundamental way one interprets the possibility of there being EPR inferences that have no quantum explanation, there is the formal question of how are general assemblages (be it quantum or post-quantum) mathematically specified. Luckily, the two ways described above (common-cause perspective and no-signalling perspective) define the same set of mathematical objects as their most general set of assemblages \cite{pauloGPT2}. Hence, one can readily explore the consequences of post-quantum EPR inferences while mulling over its philosophical implications. In the case of a bipartite EPR scenario, the definition of a general assemblage is as follows: 
	
	\begin{defn}\label{def:ccbip}\textbf{Common-cause assemblage, a.k.a.~non-signalling assemblage.--}\\
		Consider a bipartite EPR scenario, where Alice's measurements and outcomes are labelled by the elements of the sets $\X$ and $\A$, respectively, and Bob's quantum system is represented by the Hilbert space $\cH_B$. Then, an assemblage $\Ass{\A}{\X}$ is a common-cause assemblage (equivalently, a non-signalling assemblage) if the following constraints are satisfied: 
		\begin{align}\label{eq:psdns}
			\sigma_{a|x} &\geq 0 \quad \forall \, a\in\A\,,\, x\in\X\,,\\ \label{eq:NScons}
			\sum_{a} \sigma_{a|x} &= \sum_{a} \sigma_{a|x'} \quad \forall \, a\in\A\,,\, x,x' \in\X\,,\\
			\tr{\sum_{a} \sigma_{a|x}}{} &= 1 \quad \forall \, x \in \X \,. \label{eq:normNS}
		\end{align}
	\end{defn}
	
	A celebrated theorem by Gisin \cite{gisin1989stochastic} and Hughston, Jozsa, and Wootters \cite{hughston1993complete} (GHJW) shows that post-quantum assemblages in the traditional bipartite setting specified above cannot occur, that is, any common-cause assemblage given by Def.~\ref{def:ccbip} admits a quantum realisation (See Theorem \ref{GHJW}). However, recent research has shown that in multipartite EPR scenarios \cite{sainz2015postquantum,sainz2018formalism,hoban2018channel} or modified bipartite ones \cite{sainz2020bipartite} post-quantum steering may arise. In this work we focus on the former case, which we describe below. 
	
	\subsection{Multipartite EPR scenarios}
	
	\quad We focus on multipartite EPR scenarios that consist of multiple Alice-type parties and one Bob. More specifically, consider the case where we have $\Np+1$ parties: one of them has a system represented by a Hilbert space $\cH_B$ (the party called Bob), and each of the remaining parties has two associated sets of classical variables, $\X_k$ and $\A_k$ (with $k \in \{1,\ldots,\Np \}$ denoting the party), which represent the classical labels of their measurement choices and outcomes. In analogy with the bipartite case, we regard this parties as `the Alices', and refer to a particular one by Alice$_{k}$. All parties are distant, perform space-like separated actions on their share of a physical system, and cannot communicate to each other. By performing measurements, then, the Alices update their knowledge on the state preparation of Bob's system. An assemblage $\Ass{\vec{\A}}{\vec{\X}}$ in this scenario is then given by the collection of (possibly unnormalised) quantum states $\{\sigma_{a_1\ldots a_\Np | x_1 \ldots x_\Np}\}_{a_k \in \A_k\,,\, x_k \in \X_k\,,\, k \in \{1,\ldots,\Np \}}$, and $\vec{\A}$ and $\vec{\X}$ are short notation for the arrays $(\A_1,\ldots,\A_\Np)$ and $(\X_1,\ldots,\X_\Np)$, respectively. 
	
	An assemblage $\Ass{\vec{\A}}{\vec{\X}}$ is said to admit a \textit{quantum realisation} if there exists a Hilbert space $\cH_{A_k}$ for each Alice, a set of measurements $\{\{M^{(k)}_{a|x}\}_{a\in\A_k}\}_{x\in\X_k}$ in $\cH_{A_k}$ for each $k \in \{1,\ldots,\Np \}$, and a quantum state $\rho$ in $\otimes_{k=1}^{\Np} \cH_{A_k} \otimes \cH_B$, such that the elements of $\Ass{\vec{\A}}{\vec{\X}}$ can be expressed as: 
	\begin{align}
		\sigma_{a_1\ldots a_\Np | x_1 \ldots x_\Np} = \tr{\otimes_{k=1}^{\Np} M^{(k)}_{a_k|x_k} \otimes \id_{\cH_B} \, \rho}{A_1\ldots A_{\Np}}\quad \forall \, a_k \in \A_k\,,\, x_k \in \X_k\,.
	\end{align}
	
	In these scenarios, then, the most general assemblage that one can mathematically write while complying with the operational constraints of the scenario are given by the following. 
	
	\begin{defn}\label{def:ccmp}\textbf{Common-cause assemblage, a.k.a.~non-signalling assemblage -- multipartite EPR scenarios.--}\\
		Consider a multipartite EPR scenario, where the Alices' measurements and outcomes are labelled by the elements of the sets $\X_k$ and $\A_k$, respectively, for each $k \in \{1,\ldots,\Np \}$,  and Bob's quantum system is represented by the Hilbert space $\cH_B$. Then, an assemblage $\Ass{\vec{\A}}{\vec{\X}}$ is a common-cause assemblage (equivalently, a non-signalling assemblage) if the following constraints are satisfied: 
		\begin{align}
			\sigma_{a_1\ldots a_\Np | x_1 \ldots x_\Np} &\geq 0 \quad \forall \, a_k \in \A_k\,,\, x_k \in \X_k\,,\\
			\sum_{a_k \in \A_k} \sigma_{a_1 \ldots a_k \ldots a_\Np|x_1 \ldots x_k \ldots x_\Np} &=  \sum_{a_k \in \A_k} \sigma_{a_1 \ldots a_k \ldots a_\Np|x_1 \ldots x'_k \ldots x_\Np} \\ \nonumber &  \forall \, a_j\in\A_j \, \text{and} \, x_{j} \in\X_j \, \text{with} \, j \neq k\,,\, x_k,x'_k \in \X_k\,,\, k \in \{1,\ldots,\Np \} \,,\\
			\tr{\sum_{a_1 \ldots a_k \ldots a_\Np} \sigma_{a_1 \ldots a_k \ldots a_\Np|x_1 \ldots x_k \ldots x_\Np}}{} &= 1 \quad \forall \, x_k \in \X_k \,.
		\end{align}
	\end{defn}
	
	\bigskip
	
	Quantum-realisable assemblages and common-cause assemblages are not the only sets of assemblages of interest in the literature and of relevance in this work. One particular set of assemblages that is relevant is that of \textit{almost-quantum} assemblages, which is a set slightly larger than that of quantum-realisable assemblages. The relevance of almost-quantum assemblages is that they are conveniently defined such that testing whether an assemblage admits an almost-quantum realisation amounts to a single instance of a semidefinite program, hence it serves as a convenient numerical tool to outer bound quantum-realisable assemblages and, from an information-theoretical perspective, upper-bound their resourcefulness \cite{cavalcanti2016quantum}. In the next subsection, we introduce the almost-quantum assemblages, and the moment matrix formalisation to which they are equivalent, as this makes the link to semidefinite programming.

	\subsection{Almost-quantum assemblages and moment matrices}\label{ap:aqstuff}
	
	The set of almost-quantum assemblages is strongly connected to the set of almost-quantum correlations \cite{AQ}. Because of this close connection, we will discuss both in this section. As shown in Ref.~\cite{hoban2018channel}, this set of assemblages has two equivalent definitions, one of which is defined in terms of moment matrices. Here we opt to introduce their definition in terms of moment matrices, which is linked to semi-definite programming, since this proves convenient for this work. To begin, we will need some useful notation associated with EPR (and Bell) experiments.
	
	\begin{defn}\textbf{Alphabet, properties of words, and equivalence relations between words.-- }\\
		We begin by defining an alphabet, some of the properties of the words you can write with it, and equivalence relations between words. Consider the case where we have $\Np$ black-box parties in our experiment, which for consistency we refer to as Alices. Let $\X$ and $\A$ be the sets of measurement choices for each Alice, and measurement outcomes for each measurement, respectively\footnote{As before, for the purpose of this work and simplicity in the presentation, it is convenient to take the set of measurements to be the same for all Alices, and the set of outcomes to be the same for all measurements.}. The alphabet $\Upsilon_\mathrm{E}$, whose letters are of the form $a|x$, is defined as:   
		\begin{align}
			\Upsilon_\Np := \bigcup_{k =1:\Np} \{ a_k|x_k \}_{a_k \in \A\,,\, x_k \in \X}\,.
		\end{align}
		A \textbf{word} is a concatenation of elements of $\Upsilon_\Np$. Given two words $\mathbf{v},\mathbf{w}$, their concatenation $\mathbf{vw}$ yields another word. The word $\mathbf{v}^\dagger$ denotes the word given by the letters of $\mathbf{v}$ written in reverse order. The symbol $\emptyset$ denotes the \textbf{empty word}, and has length $0$. Finally, the set $\mathcal{S}^*_\mathrm{E}$ is the \textbf{set of words} of arbitrary length you may write from the letter of $\Upsilon_\Np$ and the empty word $\emptyset$. 
		
		Equivalence relations (a.k.a.~\textbf{symmetry operations}) among words are the following:
		\begin{compactitem}
			\item $\mathbf{vw} = \mathbf{v \emptyset w}$ $\forall$ $\mathbf{v},\mathbf{w} \in \mathcal{S}^*_\mathrm{E}$. 
			\item $\mathbf{vv} = \mathbf{v}$ $\forall$ $\mathbf{v} \in \Upsilon_\Np$. 
			\item $a_k|x_k \, a_{k'}|x_{k'} = a_{k'}|x_{k'} \, a_k|x_k $ $\forall$ $k \neq k' \in \{1,\ldots,\Np\}$.
		\end{compactitem}
		
		A word $\mathbf{v}$ is \textbf{null} if, after applying symmetry operations, there is a letter $a_k|x_k$ followed by a letter $a^\prime_k|x_k$ with $a_k \neq a^\prime_k$. 
		
	\end{defn}
	
	\begin{defn}\textbf{Almost-quantum set of words.--}\label{def:AQw} \\
		Let $S \subseteq \{1, \ldots, \Np\}$ be a subset of the black-box parties. Let $\mathbf{a}_S|\mathbf{x}_S$ denote the word given by $\mathbf{a}_S|\mathbf{x}_S = a_{i_1}|x_{i_1} \, \ldots \, a_{i_{|S|}}|x_{i_{|S|}}$, where $a_k|x_k \in \Upsilon_\Np \quad \forall \, k \in S$.  
		
		The set of words $\mathcal{S}_{\mathrm{AQ}}$ is given by
		\begin{align}
			\mathcal{S}_{\mathrm{AQ}} := \{ \emptyset \} \cup \left\{ \mathbf{a}_S|\mathbf{x}_S \quad \Big\lvert \quad  a_k|x_k \in \Upsilon_\Np \quad \forall \, k \in S\,,\quad   S \subseteq \{1, \ldots, \Np\} \right\}\,.
		\end{align}
	\end{defn}
	
	The following definitions now are particular to correlations in a Bell experiment, or assemblages. When talking about Bell scenarios, we will consider an $\Np$-partite scenario with $\Np$ black-box parties, which for consistency we will call Alices. A correlation in such multipartite Bell scenario will be denoted by $\pr_{\vec{\A}|\vec{\X}}$, where $\vec{\A}$ is an $\Np$ component vector, where each entry is the set $\A$, and similarly for $\vec{\X}$. For an EPR scenario, we will focus on one with $\Np$ black-box parties (Alices) and one Bob (whose quantum system has finite dimension $d$).  An assemblage in such EPR scenario will be denoted by $\Ass{\vec{\A}}{\vec{\X}}$.
	
	\begin{defn}\textbf{Bell-scenario almost-quantum moment matrix .--} \\ 
		Let $\Gamma_{\mathrm{AQ}}^{\mathrm{B}}$ be a square matrix of size  $|\mathcal{S}_{\mathrm{AQ}}| \times |\mathcal{S}_{\mathrm{AQ}}|$, whose entries are complex numbers. Let the rows and columns of $\Gamma_{\mathrm{AQ}}^{\mathrm{B}}$ be labelled by the words in $\mathcal{S}_{\mathrm{AQ}}$.  $\Gamma_{\mathrm{AQ}}^{\mathrm{B}}$ is an almost-quantum moment matrix in the Bell scenario iff it satisfies the following properties: 
		\begin{align}
			&\Gamma_{\mathrm{AQ}}^{\mathrm{B}} \geq 0 \,, \\
			&\Gamma_{\mathrm{AQ}}^{\mathrm{B}} (\emptyset,\emptyset) = 1 \,,\\
			&\Gamma_{\mathrm{AQ}}^{\mathrm{B}} (\mathbf{v},\mathbf{w}) = \Gamma_{\mathrm{AQ}}^{\mathrm{B}} (\mathbf{v'},\mathbf{w'}) \quad \text{if} \quad \mathbf{v}^\dagger\mathbf{w} = \mathbf{v'}^\dagger\mathbf{w'} \,,\\
			&\Gamma_{\mathrm{AQ}}^{\mathrm{B}} (\mathbf{v},\mathbf{w}) = 0 \quad \text{if} \quad \mathbf{v}^\dagger\mathbf{w} \quad \text{is null} \,.
		\end{align}
	\end{defn}
	
	\begin{defn}\textbf{Almost-quantum correlation.--}\\
		A correlation $\pr_{\vec{\A}|\vec{\X}}$ is almost-quantum if there exists an almost-quantum moment matrix $\Gamma_{\mathrm{AQ}}^{\mathrm{B}}$ such that: 
		\begin{align}
			\Gamma_{\mathrm{AQ}}^{\mathrm{B}}(\emptyset,\mathbf{a}_S|\mathbf{x}_S) = p_S(\mathbf{a}_S|\mathbf{x}_S)\,, \quad \forall \mathbf{a}_S|\mathbf{x}_S \in \mathcal{S}_{\mathrm{AQ}}\,,
		\end{align}
		where $p_S(\mathbf{a}_S|\mathbf{x}_S)$ is the marginal conditional probability distribution of $\pr_{\vec{\A}|\vec{\X}}$ for the parties in the set $S$.
	\end{defn}
	
	\begin{defn}\textbf{EPR-scenario almost-quantum moment matrix .--}\label{def:aqassemmm} \\ 
		Let $\Gamma_{\mathrm{AQ}}^{\mathrm{EPR}}$ be a square matrix of size  $|\mathcal{S}_{\mathrm{AQ}}| \times |\mathcal{S}_{\mathrm{AQ}}|$, whose entries are $d \times d$ complex matrices. Let the rows and columns of $\Gamma_{\mathrm{AQ}}^{\mathrm{EPR}}$ be labelled by the words in $\mathcal{S}_{\mathrm{AQ}}$.  $\Gamma_{\mathrm{AQ}}^{\mathrm{EPR}}$ is an almost-quantum moment matrix in the EPR scenario iff it satisfies the following properties: 
		\begin{align} \label{eq:28}
			&\Gamma_{\mathrm{AQ}}^{\mathrm{EPR}} \geq 0 \,, \\ \label{eq:29}
			&\Gamma_{\mathrm{AQ}}^{\mathrm{EPR}} (\mathbf{v},\mathbf{w}) = \Gamma_{\mathrm{AQ}}^{\mathrm{EPR}} (\mathbf{v'},\mathbf{w'}) \quad \text{if} \quad \mathbf{v}^\dagger\mathbf{w} = \mathbf{v'}^\dagger\mathbf{w'} \,,\\ \label{eq:thetrace1}
			&\tr{\Gamma_{\mathrm{AQ}}^{\mathrm{EPR}} (\emptyset,\emptyset)}{} = 1 \\ \label{eq:30}
			&\Gamma_{\mathrm{AQ}}^{\mathrm{EPR}} (\mathbf{v},\mathbf{w}) = \mathbf{0} \quad \text{if} \quad \mathbf{v}^\dagger\mathbf{w} \quad \text{is null} \,,
		\end{align}
		where $\mathbf{0}$ is a $d \times d$ matrix whose entries are all $0$. 
	\end{defn}
	
	\begin{defn}\textbf{Almost-quantum assemblage.--}\\
		An assemblage $\Ass{\vec{\A}}{\vec{\X}}$ is almost-quantum if there exists an almost-quantum moment matrix $\Gamma_{\mathrm{AQ}}^{\mathrm{EPR}}$ such that: 
		\begin{align}\label{eq:31}
			\Gamma_{\mathrm{AQ}}^{\mathrm{EPR}}(\emptyset,\emptyset) = \rho_R \,,
		\end{align}
		where $\rho_R = \sum_{a_1, \ldots\, a_{\Np} \in \A} \sigma_{a_1,\ldots,a_\Np|x_1,\ldots,x_\Np}$ is the reduced state of Bob's system, and
		\begin{align}\label{eq:32}
			\Gamma_{\mathrm{AQ}}^{\mathrm{EPR}}(\emptyset,\mathbf{a}_S|\mathbf{x}_S) = \sigma_S(\mathbf{a}_S|\mathbf{x}_S)\,, \quad \forall \mathbf{a}_S|\mathbf{x}_S \in \mathcal{S}_{\mathrm{AQ}}\,,
		\end{align}
		where $\sigma_S(\mathbf{a}_S|\mathbf{x}_S)$ is the marginal assemblage element of $\Ass{\vec{\A}}{\vec{\X}}$ for the parties in the set $S$:
		\begin{align}
			\sigma_S(\mathbf{a}_S|\mathbf{x}_S) = \sum_{a_k \in \A \,:\, k \not\in S} \sigma_{a_1,\ldots,a_\Np|x_1,\ldots,x_\Np}\,.
		\end{align}
	\end{defn}
	Notice that Bob's reduce state $\rho_R$ as well as the marginal assemblage elements $\sigma_S(\mathbf{a}_S|\mathbf{x}_S)$ are well defined since we are working with assemblages $\Ass{\vec{\A}}{\vec{\X}}$ that are non-signalling. As a consequence, we have that all almost-quantum assemblages are non-signalling assemblages. However, in general scenarios the converse is not true \cite{sainz2015postquantum}. It can also be shown that all quantum assemblages are almost-quantum assemblages \cite{sainz2015postquantum,hoban2018channel}. Thus, this set is very useful for understanding how to recover the set of a quantum assemblages from a more general set.
	
	\section{Characterising assemblages}
	
	\quad Let us begin this section by discussing the question of characterising quantum correlations in Bell scenarios. Notice the focus on `the question' rather than `the attempted answers'. For simplicity and concreteness, let us base the discussion on the case of correlations in a bipartite Bell scenario. In this case, we have two distant parties -- Alice and Bob -- which share a physical system and perform measurements on it. The object of interest here is the correlations between their measurement outcomes, captured by the conditional probability distribution $\pr_{\A\Bo|\X\Y} = \{\{p(ab|xy)\}_{a\in\A\,,\, b \in \Bo}\}_{x \in \X\,,\, y \in \Y}$, where $\X$ (resp.~$\Y$) is the set of classical labels of Alice's (resp.~Bob's) measurement choices, and $\A$ (resp.~$\Bo$) denotes the set of classical labels of Alice's (resp.~Bob's) measurement outcomes\footnote{In principle, different measurement can have different number of outcomes, but for the current discussion we can take these sets to be the same without loss of generality.}. Notice that in this description of a Bell scenario, the party named Bob plays a similar role to that of Alice, unlike in an EPR scenario, and we hope any possible confusion is avoided from context. 
	
	In a Bell scenario, then, we want to characterise the set $\cQ_{\text{Bell}}$ of correlations $\pr_{\A\Bo|\X\Y}$ that admit a \textit{quantum realisation}. We recall here that $\pr_{\A\Bo|\X\Y}$ admits a quantum realisation\footnote{To be more precise, this is the definition of quantum realisation in the so-called \textit{tensor product paradigm}.} if there exists a Hilbert space $\cH_A$ associated to Alice, a Hilbert space $\cH_B$ associated to Bob, a bipartite quantum system in $\cH_A \otimes \cH_B$, a density matrix $\rho \in \mathcal{D}(\cH_A \otimes \cH_B)$, a collection of complete projective measurements $\{\{\Pi^{(\text{A})}_{a|x}\}_{a\in\A}\}_{x\in\X}$ in $\cH_A$, and a collection of complete projective measurements $\{\{\Pi^{(\text{B})}_{b|y}\}_{b\in\Bo}\}_{y\in\Y}$ in $\cH_B$, such that 
	\begin{align}
		p(ab|xy) = \tr{\Pi^{(\text{A})}_{a|x} \otimes \Pi^{(\text{B})}_{b|y} \, \rho}{} \,, \quad \forall \, a\in\A\,,\, b \in \Bo\,\, x \in \X\,,\, y \in \Y\,.
	\end{align}
	
	The question of characterising quantum correlations is then framed as: which principles do we need to impose on $\pr_{\A\Bo|\X\Y}$ so that all and only the ones compatible with them are the quantum ones $\cQ_{\text{Bell}}$? Notice then that such principles can then be formulated merely in terms of the probabilities themselves, by imposing constraints on the positive numbers in $\pr_{\A\Bo|\X\Y}$. Probabilities are objects with which we are familiar from probability theory, and the object itself is the same as we would study classically (apart from that, when going beyond classical, it is possible to general different correlations $\pr_{\A\Bo|\X\Y}$). All this is to say that there is a natural way to try and bound the correlations by formulating principles regarding only their statistical predictions\footnote{This approach to characterising correlations is sometimes referred to as \textit{device-independent}. Given how device-independent principles have failed to answer the question fully, it is now conjectured by some that device-independent principles are useful but not sufficient to characterise $\cQ_{\text{Bell}}$. }. Examples of these principles include \textit{no signalling} (that Alice and Bob cannot use $\pr_{\A\Bo|\X\Y}$ to communicate faster than the speed of light) \cite{popescu1994quantum}, \textit{non-trivial communication complexity} (that Alice cannot use $\pr_{\A\Bo|\X\Y}$ to send an unbounded amount of information to Bob by transmitting only a single bit of classical communication) \cite{van2013implausible}, and \textit{macroscopic locality} (when the source that prepares Alice and Bob's shared physical system sends them many independent copies of it rather than a single one, then the correlations observed in this macroscopic experiment\footnote{Under certain assumptions, see Ref.~\cite{ML}.} are classical even if the single-system correlations $\pr_{\A\Bo|\X\Y}$ are not) \cite{ML,MNC}, among others. 
	
	\bigskip
	
	Now let us get back to an EPR scenario, were the Alices have access to classical variables (denoting their measurement choices and outcomes) and Bob has access to a quantum system prepared in a known quantum state. Here, we not only want to characterise the outcome statistics of the Alices' measurements, but also Bob's Hilbert space and his system's quantum state. Hence, the tools and approach from the case of Bell scenarios cannot be directly applied to assemblages. The main challenge here is then what to incorporate in the formulation of the principles such that they capture Bob's quantum nature meaningfully. The types of axioms used to derive quantum theory are quite different from the type of device-independent principles pursued for characterising correlations in Bell scenarios, so the challenge is to find the crucial aspect of each approach that may be relevant for characterising assemblages in EPR scenarios. This comprehensive approach is beyond the scope of this work. Here we take the first steps into approaching the question by asking ourselves: (i) what information should we complement device-independent principles with so as to get a chance at bounding assemblages?, and (ii) how useful may such a first step be for tackling the question of interest? 
	
	\subsection{Bipartite EPR scenarios}
	
	\quad Traditional bipartite EPR scenarios are a very special case of EPR scenarios for various reasons, one of them being that, here, the question of `characterising quantum assemblages' has been fully answered:
	
	\begin{thm} \textbf{GHJW theorem \cite{gisin1989stochastic,hughston1993complete} .--}\label{GHJW} \\
		Consider a bipartite EPR scenario $(\A,\X,\cH_B)$. An assemblage $\Ass{\A}{\X}$ admits a quantum realisation if and only if it satisfies the No-Signalling principle. 
	\end{thm}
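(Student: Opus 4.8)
\emph{Proof plan.} The plan is to prove the two implications separately, with essentially all of the content in the ``non-signalling $\Rightarrow$ quantum-realisable'' direction, which is the classical GHJW construction.

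For the easy direction I would just verify that a quantum-realisable assemblage obeys Def.~\ref{def:ccbip}. Positivity of $\sigma_{a|x} = \tr{M_{a|x}\otimes\id_{\cH_B}\,\rho}{A}$ is inherited from positivity of $\rho$ under partial trace; the no-signalling equality \eqref{eq:NScons} follows because $\sum_a M_{a|x} = \id_{\cH_A}$ for every $x$, so $\sum_a\sigma_{a|x} = \tr{\rho}{A}$ independently of $x$; and \eqref{eq:normNS} is just $\trace{\rho}=1$.

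For the converse, given a non-signalling $\Ass{\A}{\X}$, I would first use \eqref{eq:NScons} to define the $x$-independent reduced state $\rho_B := \sum_a\sigma_{a|x}$, which by \eqref{eq:psdns}--\eqref{eq:normNS} is a genuine density operator on $\cH_B$. I would then take $\cH_A$ to be (a copy of) the support of $\rho_B$, fix the eigenbasis $\rho_B = \sum_i\lambda_i\proj{i}$ with $\lambda_i>0$, purify to $\ket{\psi} = \sum_i\sqrt{\lambda_i}\ket{i}_A\ket{i}_B$, and set $\rho := \proj{\psi}$. The engine of the proof is the one-line ``steering'' identity $\tr{O\otimes\id_{\cH_B}\,\proj{\psi}}{A} = \sqrt{\rho_B}\,O^{T}\sqrt{\rho_B}$, valid for any operator $O$ on $\cH_A$, with the transpose taken in the Schmidt basis $\{\ket{i}\}$; one proves it by expanding both sides in that basis. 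Defining $M_{a|x} := \bigl(\sqrt{\rho_B}^{\,-1}\sigma_{a|x}\sqrt{\rho_B}^{\,-1}\bigr)^{T}$ (inverse on the support), the identity immediately gives $\tr{M_{a|x}\otimes\id_{\cH_B}\,\rho}{A} = \sigma_{a|x}$, so it remains only to check that $\{M_{a|x}\}_a$ is a valid POVM for each $x$: positivity follows from $\sigma_{a|x}\geq 0$ together with the facts that conjugation by a Hermitian operator and transposition both preserve positive semidefiniteness, and completeness $\sum_a M_{a|x} = \id_{\cH_A}$ follows by feeding $\sum_a\sigma_{a|x} = \rho_B$ through the same formula.

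The main obstacle is not conceptual but careful bookkeeping around rank-deficiency of $\rho_B$: one must ensure that $\sqrt{\rho_B}^{\,-1}$ is only ever applied to operators supported on $\mathrm{supp}(\rho_B)$, which is guaranteed by the sandwich bound $0\leq\sigma_{a|x}\leq\rho_B$ coming from \eqref{eq:psdns} and \eqref{eq:NScons}, and one must keep the transpose consistently in the Schmidt basis so that the support projector maps to $\id_{\cH_A}$. No Naimark dilation is needed, since the notion of quantum realisation for assemblages used here already permits general POVMs on $\cH_A$.
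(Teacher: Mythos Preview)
Your proof is correct and is the standard, direct GHJW argument. The paper, however, deliberately takes a different route: it factors the ``non-signalling $\Rightarrow$ quantum'' direction through the almost-quantum set, first building an EPR moment matrix $\Gamma_{\mathrm{AQ}}^{\mathrm{EPR}}$ from the assemblage (using the same $N_{a|x}=\sigma_R^{-1/2}\sigma_{a|x}\sigma_R^{-1/2}$ you define, but then Naimark-dilating to projectors so as to exhibit $\Gamma=V^\dagger V\geq 0$), and second extracting a state and \emph{projective} measurements from an arbitrary almost-quantum moment matrix via its Gramian structure. Your approach is shorter and more elementary---you need no moment matrices and no Naimark dilation, since (as you note) the paper's definition of quantum realisation already allows POVMs. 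What the paper's detour buys is that it establishes, as a by-product, the collapse ``non-signalling $=$ almost-quantum $=$ quantum'' in the bipartite case and showcases the moment-matrix machinery that is central to the rest of the paper; your argument proves the theorem cleanly but does not expose that intermediate structure.
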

	Formally, the mathematical constraints that the No-Signalling principle implies are those of Eq.~\eqref{eq:NScons}. 
	There are various proofs of this theorem in the literature after the seminal work of Gisin \cite{gisin1989stochastic} and Hughston, Jozsa, and Wootters \cite{hughston1993complete}, using different proof techniques. In the following we include one such proof based on moment matrices, for illustration.
	
	\begin{proof}
		
		The structure of the proof is to show that: on the one hand, a non-signalling assemblage in this setting is an almost-quantum assemblage as per Def.~\ref{def:aqassemmm}; on the other hand, all almost-quantum assemblages are quantum-realisable assemblages in this setting.
		
		First, recall how almost-quantum assemblages are defined in terms of a moment matrix $\Gamma_{\mathrm{AQ}}^{\mathrm{EPR}}$ in the following way:
		\begin{compactitem}
			\item The set of words is $\mathcal{S}_{\mathrm{AQ}} := \{ \emptyset \} \cup \left\{ a|x  \Big\lvert  a \in \A \,,\, x\in\X \right\}$ as per Def.~\ref{def:AQw},
			\item $\Gamma_{\mathrm{AQ}}^{\mathrm{EPR}} (a|x,a'|x) = \mathbf{0}$ if $a\neq a'$, for all $a\neq a' \in \A$ and $x \in \X$ (Eq.~\eqref{eq:30}),
			\item $\Gamma_{\mathrm{AQ}}^{\mathrm{EPR}} (a|x,\emptyset) = \Gamma_{\mathrm{AQ}}^{\mathrm{EPR}} (\emptyset,a|x) = \Gamma_{\mathrm{AQ}}^{\mathrm{EPR}} (a|x,a|x)$ for all $a\in\A\,,\, x\in\X$ (Eq.~\eqref{eq:29}).
		\end{compactitem}
		
		For simplicity, in the rest of this proof we will write $\Gamma_{\mathrm{AQ}}^{\mathrm{EPR}}$ as $\Gamma$.
		
		Given a non-signalling assemblage there is a moment matrix $\Gamma$ such that $\Gamma (\emptyset,\emptyset):= \sigma_R$, $\Gamma (\emptyset,a|x) := \sigma_{a|x} =: \Gamma (a|x,\emptyset)$ for all $a\in\A\,,\, x\in\X$. To see that this $\Gamma$ is a valid moment matrix for the assemblage $\Ass{\A}{\X}$ notice that (i) conditions in Eqs.~\eqref{eq:29}, \eqref{eq:thetrace1}, \eqref{eq:30}, \eqref{eq:31}, and \eqref{eq:32} hold directly by construction, and (ii) condition in Eq.~\eqref{eq:28} can be shown by explicitly constructing a positive semi-definite matrix. Take $\sigma_R$, and write its purification as $\ket{\psi}=\sum_i\sqrt{\lambda_i}\ket{i}_{aux}\ket{i}_B$, with $\cH_{aux}$ being an auxiliary Hilbert space used for purification. Define now
		\begin{equation}
			N_{a|x}:=\frac{1}{\sqrt{\sigma_R}}\sigma_{a|x}\frac{1}{\sqrt{\sigma_R}};
		\end{equation}
		and notice that $N_{a|x}$ is a POVM. Thus there is another auxiliary system $\cH_{aux'}$ such that 
		\begin{equation}
			\tr{(N_{a|x}\otimes\mathbbm{1})\ket{\psi}\bra{\psi}}{aux}\equiv\tr{\Pi_{a|x}\ket{\Psi}\bra{\Psi}}{aux,aux'},
		\end{equation}
		where $\ket{\Psi}=\ket{\psi}\otimes\ket{0}_{aux'}$. Finally, define a matrix $V$ such that
		\begin{equation}
			V^\dagger(\emptyset):=\ket{\Psi};
		\end{equation}
		\begin{equation}
			V^\dagger(a|x):=\Pi_{a|x}\ket{\Psi}
		\end{equation}
		Here the terms in the brackets denote the particular words that will appear in the final moment matrix, as will become clear now. Define a positive semi-definite matrix $\Sigma=V^\dagger V$. From this matrix we can recover $\Gamma$ by tracing out the two auxiliary systems with spaces $\cH_{aux}$ and $\cH_{aux'}$, i.e.  $\Gamma=\tr{V^\dagger V}{aux,aux'}$. Since partial tracing is a CP map and $V^\dagger V\geq0$, thus $\Gamma\geq0$ and every non-signalling assemblage has an almost-quantum moment matrix.
		
		To prove that every almost-quantum moment matrix has a quantum realisation in this bipartite setting, from the moment matrix we can explicitly describe the state and measurements for Alice that give that moment matrix. Given a moment matrix $\Gamma$, the sub-matrix $\Gamma(\emptyset)$ is a Gramian matrix such that each element $\Gamma(\mathbf{0})_{ij}:=\bra{i}\Gamma(\emptyset)\ket{j}=\bra{u_j} u_i\rangle$ is the inner product of a set of (possibly sub-normalised) vectors $\{\ket{u_i}\in\mathcal{H}\}_i$ in some Hilbert space $\mathcal{H}_{A}$, where $A$ is used to denote Alice's system. From this set of vectors, we can construct a quantum state in the space $\mathcal{H}_{A}\otimes\mathcal{H}_{B}$:
		\begin{equation}
			\ket{\psi}=\sum_{i}\ket{u_{i}}\ket{i}_{B}
		\end{equation}
		where $\ket{i}_{B}$ is a vector in $\mathcal{H}_{B}$. It can readily be deduced that this gives the correct reduced density matrix, by noting:
		\begin{align}
			\tr{\ket{\psi}\bra{\psi}}{A}&=\sum_{k}\langle k|_{A}\left(\sum_i\ket{u_i}\ket{i}_{B}\sum_{j}\bra{u_j}\bra{j}_{B}\right)\ket{k}_{A}\\
			&=\sum_{ij}\bra{u_j}\sum_{k}\ket{k}\bra{k}_{A}\ket{u_i}\ket{i}\bra{j}_{B}\\
			&=\sum_{ij}\bra{u_j}u_i\rangle\ket{i}\bra{j}_{B}\\
			&=\Gamma(\emptyset)
		\end{align}
		It remains to describe Alice's measurements, which will be projectors $\{\Pi_{a|x}\}_{a,x}$ that project onto $\mathcal{H}_{A}$. Given a moment matrix $\Gamma$, from each sub-matrix $\Gamma_{a|x,a|x}$ we can describe a projective measurement. First note that, as before, each element satisfies $\bra{i}\Gamma(a|x,a|x)\ket{j}=\langle u_{j}^{a|x}\ket{u_{i}^{a|x}}$, thus with $a|x$ we can associate a set of vectors $\{\ket{u_{i}^{a|x}}\}_{i}$ living in the same Hilbert space $\cH_{A}$ as before; this is because both $\Gamma(\emptyset)$ and $\Gamma(a|x,a|x)$ have the same dimension.
		
		Now each of Alice's projectors $\Pi_{a|x}$ just projects onto the span of the set $\{\ket{u_{i}^{a|x}}\}_{i}$. First note that since $\Gamma_{\mathrm{AQ}}^{\mathrm{EPR}} (a|x,a'|x) = \mathbf{0}$ if $a\neq a'$, this implies that the span of the set  $\{\ket{u_{i}^{a'|x}}\}_{i}$ lies in the orthogonal complement of the span of the set $\{\ket{u_{i}^{a|x}}\}_{i}$. This property ensures that, for each $x$, $\Pi_{a|x}\Pi_{a'|x}=\delta_{a,a'}\Pi_{a|x}$. Because $\Gamma(\emptyset,a|x)=\Gamma(a|x,a|x)$, this implies that $\langle u_{j}\ket{u^{a|x}_i}=\langle u^{a|x}_j\ket{u^{a|x}_i}$, thus both $\ket{u_{j}}$ and $\ket{u^{a|x}_{j}}$ have the same inner product with all vectors $\ket{u^{a|x}_i}$ onto which $\Pi_{a|x}$ projects. Therefore, $\Pi_{a|x}\ket{u_i}=\ket{u^{a|x}_i}$ and that $\sum_{a}\Pi_{a|x}=\mathbbm{1}$ since $\sum_{a}\Gamma(\emptyset,a|x)=\Gamma(\emptyset)$. Finally, given that $\Pi_{a|x}\ket{u_i}=\ket{u^{a|x}_i}$ one can then infer that this complete, projective measurement $\{\Pi_{a|x}\}_{a}$ recovers the assemblage element $\sigma_{a|x}$ in the following way:
		\begin{align}
			\sigma_{a|x}&=\tr{\Pi_{a|x}\otimes\mathbbm{1}\ket{\psi}\bra{\psi}}{A}\\
			&=\sum_{k}\langle k|_{A}\left(\sum_i\Pi_{a|x}\ket{u_i}\ket{i}_{B}\sum_{j}\bra{u_j}\bra{j}_{B}\right)\ket{k}_{A}\\
			&=\sum_{ij}\bra{u_j}\sum_{k}\ket{k}\bra{k}_{A}\ket{u^{a|x}_i}\ket{i}\bra{j}_{B}\\
			&=\sum_{ij}\bra{u_j}u_i^{a|x}\rangle\ket{i}\bra{j}_{B}\\
			&=\Gamma(\emptyset,a|x),
		\end{align}
		thus completing the proof.
	\end{proof}
	
	From this we see that characterising quantum-realisable assemblages is a fundamentally \textit{multipartite} task. Another way of viewing this is that the non-signalling principle is enough to capture bipartite quantum assemblages.

	\section{Characterising assemblages by characterising correlations}
	
	\quad In this section we discuss how device-independent principles defined to bound correlations in Bell scenarios can be re-purposed to bound the set of assemblages in EPR scenarios. Since there is a close connection between correlations and assemblages, the hope is to leverage our understanding of the former to the case of assemblages.
	
	The first step we take is motivated by the practicalities of the EPR experiment: at the end of the day, for Bob to characterise any assemblage he needs to perform tomography on the state of his system. That is, he will have access to a set of tomographically-complete measurements $\text{TC}_\text{B} := \{\{M_{b|y}\}_b\}_y$ in $\cH_B$, and all the information needed to reconstruct the elements of $\Ass{\vec{\A}}{\vec{\X}}$ is given by 
	\begin{align}\label{eq:tccorr}
		\pr_{\vec{\A}\Bo|\vec{\X}\Y}^{\text{\textbf{T}}} := \{p(\vec{a}b|\vec{x}y) \equiv \tr{M_{b|y}\sigma_{a_1...a_\Np|x_1...x_\Np}}{} \ \vert \ M_{b|y}\in \text{TC}_\text{B} \ , \ \sigma_{a_1\ldots a_\Np |x_1 \ldots x_\Np} \in \Ass{\vec{\A}}{\vec{\X}} \}\,.
	\end{align}
	For each $y$, the elements $\{M_{b|y}\}_b$ can be taken to be projectors without loss of generality. 
	
	Given a particular but arbitrary principle for correlations in Bell scenarios, one may then apply it to assemblages as follows: 
	
	\begin{defn}\label{defn}
		Let $\cP$ be a principle for correlations in Bell scenarios, We say that an assemblage $\Ass{\vec{\A}}{\vec{\X}}$ satisfies $\cP$ if the correlation $\pr_{\vec{\A}\Bo|\vec{\X}\Y}^{\text{\textbf{T}}}$ from Eq.~\eqref{eq:tccorr} -- produced when Bob performs tomographically-complete measurements on the elements of $\Ass{\vec{\A}}{\vec{\X}}$ -- satisfies $\cP$.
	\end{defn}

	\subsection{Macroscopic classicality considerations}\label{se:4.1}
	
	\quad We first illustrate how to use our idea to impose on assemblages the Macroscopic Locality principle \cite{ML}, or more precisely, its stronger version called Macroscopic Non-contextuality \cite{MNC}. 
	
	The principle of Macroscopic Non-contextuality states that a correlation $\pr_{\vec{\A}\Bo|\vec{\X}\Y}^{\text{\textbf{T}}} $ is compatible with a particular classical limit\footnote{The particular aspects of this classical limit go beyond the scope of this manuscript, and are discussed elsewhere \cite{vini2}.} if and only-if  $\pr_{\vec{\A}\Bo|\vec{\X}\Y}^{\text{\textbf{T}}} $ belongs to the so-called almost-quantum set of correlations. So the question is: what is the set of assemblages whose correlations $\pr_{\vec{\A}\Bo|\vec{\X}\Y}^{\text{\textbf{T}}} $ are all and only almost-quantum correlations? We discuss below answers to this question through Propositions \ref{lemma} and \ref{conj}.
	
	\begin{prop}\label{lemma}
		Almost-quantum assemblages satisfy macroscopic non-contextuality. 
	\end{prop}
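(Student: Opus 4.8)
By Definition~\ref{defn}, it suffices to show the following: whenever $\Ass{\vec{\A}}{\vec{\X}}$ is an almost-quantum assemblage and Bob performs a tomographically-complete family of projective measurements $\{\{M_{b|y}\}_{b}\}_{y}$ on $\cH_B$, the resulting correlation $\pr_{\vec{\A}\Bo|\vec{\X}\Y}^{\text{\textbf{T}}}$ of Eq.~\eqref{eq:tccorr} is an almost-quantum correlation for the $(\Np+1)$-partite Bell scenario in which Bob becomes an extra black-box party. So the plan is to convert an EPR-scenario almost-quantum moment matrix certifying $\Ass{\vec{\A}}{\vec{\X}}$ (Def.~\ref{def:aqassemmm}), together with the projectors $M_{b|y}$, into a Bell-scenario almost-quantum moment matrix certifying $\pr_{\vec{\A}\Bo|\vec{\X}\Y}^{\text{\textbf{T}}}$. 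Below I abbreviate $\Gamma_{\mathrm{AQ}}^{\mathrm{EPR}}$ by $\Gamma^{\mathrm{E}}$ and call the matrix to be built $\Gamma^{\mathrm{B}}$.

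In the $(\Np+1)$-partite set of words $\mathcal{S}_{\mathrm{AQ}}$ (Def.~\ref{def:AQw}) every party contributes at most one letter, so each word carries at most one of Bob's letters $b|y$; as Bob's letters commute with all the Alices' letters, that letter can be placed at the end. Hence every $\mathbf{v}\in\mathcal{S}_{\mathrm{AQ}}$ is, up to symmetry operations, of the form $\mathbf{v}_{A}\,\ell$ with $\mathbf{v}_{A}$ a word over the Alices only (hence labelling an entry of $\Gamma^{\mathrm{E}}$) and $\ell\in\{b|y\}\cup\{\emptyset\}$. Setting $M_{\emptyset}:=\id_{\cH_B}$, I would define
\begin{equation}\label{eq:GammaBdef}
	\Gamma^{\mathrm{B}}\!\left(\mathbf{v}_{A}\,\ell,\ \mathbf{w}_{A}\,\ell'\right)\ :=\ \trace{M_{\ell}\,\Gamma^{\mathrm{E}}(\mathbf{v}_{A},\mathbf{w}_{A})\,M_{\ell'}}\,,
\end{equation}
which is unambiguous because $\ell$ is the unique Bob letter of $\mathbf{v}$ and $\Gamma^{\mathrm{E}}$ is invariant under symmetry operations on words over the Alices (Eq.~\eqref{eq:29}).

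The next task is to check the four defining properties of a Bell-scenario almost-quantum moment matrix for $\Gamma^{\mathrm{B}}$. Normalisation, $\Gamma^{\mathrm{B}}(\emptyset,\emptyset)=\trace{\Gamma^{\mathrm{E}}(\emptyset,\emptyset)}=1$, is exactly Eq.~\eqref{eq:thetrace1}. For positivity, $\Gamma^{\mathrm{E}}\ge 0$ (Eq.~\eqref{eq:28}) yields a Gram factorisation $\Gamma^{\mathrm{E}}(\mathbf{v},\mathbf{w})=T_{\mathbf{v}}^{\dagger}T_{\mathbf{w}}$ by operators $T_{\mathbf{v}}\colon\cH_B\to\mathcal{K}$ into some finite-dimensional space, and then $\Gamma^{\mathrm{B}}(\mathbf{v}_{A}\ell,\mathbf{w}_{A}\ell')=\trace{(T_{\mathbf{v}_{A}}M_{\ell})^{\dagger}(T_{\mathbf{w}_{A}}M_{\ell'})}$ displays $\Gamma^{\mathrm{B}}$ as the Gram matrix of the operators $\{T_{\mathbf{v}_{A}}M_{\ell}\}$ for the Hilbert--Schmidt inner product, so $\Gamma^{\mathrm{B}}\ge 0$. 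The equivalence condition ($\Gamma^{\mathrm{B}}(\mathbf{v},\mathbf{w})=\Gamma^{\mathrm{B}}(\mathbf{v}',\mathbf{w}')$ when $\mathbf{v}^{\dagger}\mathbf{w}=\mathbf{v}'^{\dagger}\mathbf{w}'$) decouples---because the symmetry operations only ever commute letters of distinct parties---into an Alice part, handled by Eq.~\eqref{eq:29} for $\Gamma^{\mathrm{E}}$, and a Bob part that reduces to the identities $\trace{M_{b|y}X}=\trace{XM_{b|y}}=\trace{M_{b|y}XM_{b|y}}$, valid for any projector and any $X$ by cyclicity and idempotency. The nullity condition holds since $\mathbf{v}^{\dagger}\mathbf{w}$ being null is witnessed within a single party: if within the Alices then $\Gamma^{\mathrm{E}}(\mathbf{v}_{A},\mathbf{w}_{A})=\mathbf{0}$ by Eq.~\eqref{eq:30}, and if by Bob's letters then $\ell,\ell'$ are $b|y,b'|y$ with $b\ne b'$, so $M_{b'|y}M_{b|y}=0$ annihilates \eqref{eq:GammaBdef} after cyclic rotation. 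Finally, using Eqs.~\eqref{eq:31}--\eqref{eq:32} and $\sum_{b}M_{b|y}=\id_{\cH_B}$, a short computation gives $\Gamma^{\mathrm{B}}(\emptyset,\mathbf{a}_{S}|\mathbf{x}_{S})=\trace{\sigma_{S'}(\mathbf{a}_{S'}|\mathbf{x}_{S'})}$ when $\Np+1\notin S$ and $=\trace{M_{b|y}\,\sigma_{S'}(\mathbf{a}_{S'}|\mathbf{x}_{S'})}$ when $\Np+1\in S$ (with $S'=S\setminus\{\Np+1\}$), which in both cases is the marginal of $\pr_{\vec{\A}\Bo|\vec{\X}\Y}^{\text{\textbf{T}}}$ on the parties of $S$; hence $\Gamma^{\mathrm{B}}$ certifies that $\pr_{\vec{\A}\Bo|\vec{\X}\Y}^{\text{\textbf{T}}}$ is almost-quantum, i.e.\ that $\Ass{\vec{\A}}{\vec{\X}}$ satisfies macroscopic non-contextuality.

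The step I expect to require the most care is the equivalence/nullity bookkeeping. The crux is that symmetry operation (iii) commutes only letters of distinct parties, so two of Bob's letters never swap; thus, writing $\mathbf{v}=\mathbf{v}_{A}\ell$ and $\mathbf{w}=\mathbf{w}_{A}\ell'$, the Bob part of $\mathbf{v}^{\dagger}\mathbf{w}$ is the \emph{ordered} product $\ell\ell'$, which both makes the split---and hence formula~\eqref{eq:GammaBdef}---well defined, and shows that a short Bob-word determines $(\ell,\ell')$ uniquely except for $\ell\ell'=b|y$, realised by $(\ell,\ell')\in\{(b|y,\emptyset),(\emptyset,b|y),(b|y,b|y)\}$---all of which give the same value by the trace identities above. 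Everything else amounts to routine manipulation of traces of (possibly non-commuting) projectors.
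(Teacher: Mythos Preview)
Your argument is correct: the construction $\Gamma^{\mathrm{B}}(\mathbf{v}_{A}\ell,\mathbf{w}_{A}\ell')=\trace{M_{\ell}\,\Gamma^{\mathrm{E}}(\mathbf{v}_{A},\mathbf{w}_{A})\,M_{\ell'}}$ really does produce an $(\Np{+}1)$-partite Bell-scenario almost-quantum moment matrix, and your treatment of the equivalence and nullity conditions via the Alice/Bob split of each word is sound (the key observation that two Bob letters never swap is exactly what makes the split well defined). Note, incidentally, that tomographic completeness plays no role in your argument---any projective measurement for Bob suffices---which is also true of the paper's proof.

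The paper, however, does not work at the level of moment matrices at all. It invokes the alternative characterisation of almost-quantum assemblages from \cite[Lemma~16]{hoban2018channel}: every almost-quantum assemblage is of the form $\sigma_{\vec{a}|\vec{x}}=\tr{\Pi^{(1)}_{a_1|x_1}\cdots\Pi^{(\Np)}_{a_\Np|x_\Np}\otimes\mathbbm{1}_B\,\proj{\psi}}{\mathcal{K}}$ on a factorised space $\mathcal{K}\otimes\cH_B$, with the permutation property on $\ket{\psi}$. Since Bob's projectors act on the $\cH_B$ factor, they automatically commute with the Alices' operators, and the one-line display $p(\vec{a}b|\vec{x}y)=\trace{\Pi^{(1)}_{a_1|x_1}\cdots\Pi^{(\Np)}_{a_\Np|x_\Np}\otimes\Pi_{b|y}\,\proj{\psi}}$ is already an almost-quantum realisation of the correlation. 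Your route is more self-contained---it stays entirely within the moment-matrix definitions given in the paper and needs no external lemma---at the price of the word-combinatorics bookkeeping; the paper's route is shorter but outsources the structural work to \cite{hoban2018channel}.
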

	
	\begin{proof} \quad \\
		An assemblage $\Ass{\vec{\A}}{\vec{\X}}$ is almost-quantum assemblage if there exist \cite[Lemma 16]{hoban2018channel}:
		\begin{compactitem}
			\item A Hilbert space $\cH:=\mathcal{K}\otimes\cH_B$;
			\item A state $\ket{\psi}\in\cH$;
			\item A collection of projective measurements $\{\{\Pi^{(i)}_{a_i|x_i}\}_{a_i\in\A}\}_{x_i\in\X}\}$ \blk acting on $\mathcal{K}$, for each party ${i=1,\dots,\Np}$,
		\end{compactitem}
		such that 
		\begin{equation}
			\sigma_{a_1...a_\Np|x_1...x_\Np}:=\tr{\Pi^{(1)}_{a_1|x_1}...\Pi^{(\Np)}_{a_\Np|x_\Np}\otimes\mathbbm{1}_B\ket{\psi}\bra{\psi}}{\mathcal{K}},
		\end{equation}
		and
		\begin{equation}
			\Pi^{(1)}_{a_1|x_1}\ldots \, \Pi^{(\Np)}_{a_\Np|x_\Np}\otimes\mathbbm{1}_B\ket{\psi}=\Pi^{(\pi(1))}_{a_{\pi(1)}|x_{\pi(1)}}\ldots \, \Pi^{(\pi(\Np))}_{a_{\pi(\Np)}|x_{\pi(\Np)}}\otimes\mathbbm{1}_B\ket{\psi},
		\end{equation}
		for all permutations $\pi$ of the $\Np$ Alices.
		
		Let $\{\Pi_{b|y}\}_{b\in\Bo\,,\,y\in\Y}$ be a collection of projective measurements for Bob (not necessarily a tomographically complete one), that he could perform over the elements of $\Sigma_{\vec{\A}|\vec{\X}}$.

		The correlations $\pr_{\vec{\A}\Bo|\vec{\X}\Y}$ arising from these measurements then satisfy the following property: there exist
		\begin{compactitem}
			\item a Hilbert space $\cH$;
			\item a state $\ket{\psi}\in\cH$;
			\item a collection of projective measurements $\{\Pi^{(i)}_{a_i|x_i}\}_{a_i\in\A,x_i\in\X,i=1,\ldots,\Np}\cup\{\Pi_{b|y}\}_{b\in\Bo,y\in\Y}$ acting on $\cH$,
		\end{compactitem}
		such that
		\begin{equation}\label{eq:19}
			p(\vec{a}b|\vec{x}y):=\trace{\Pi_{b|y}\sigma_{a_1...a_\Np}}=\trace{\Pi^{(1)}_{a_1|x_1}...\Pi^{(\Np)}_{a_\Np|x_\Np}\otimes\Pi_{b|y}\ket{\psi}\bra{\psi}}. 
		\end{equation}
		
		Now, since Bob's measurements commute with all of Alices' measurements, then Eq.~\eqref{eq:19} provides an almost-quantum realisation of the correlations $\pr_{\vec{\A}\Bo|\vec{\X}\Y}$.
		
		The last step is to recall that almost-quantum correlations satisfy the macroscopic noncontextuality principle \cite{MNC}. Hence, almost-quantum assemblages also satisfy the principle in the sense of Def.~\ref{defn}.
	\end{proof}
	
	\bigskip
	
	We have therefore shown that almost-quantum assemblages satisfy the macroscopic non-contextuality principle. Now, is the converse true? That is, is this an `if and only if' result? The answer to this is not straightforward, and requires one to leverage some physical characteristics of an EPR experiment, as we discuss next.
	
	For the argument, notice that there exist assemblages $\Ass{\vec{\A}}{\vec{\X}}$ such that:
	\begin{compactitem}
		\item the correlations $\pr_{\vec{\A}\Bo|\vec{\X}\Y} $ that may arise when Bob performs any measurements on his system (including but not restricted to \ tomographically-complete ones) admit a quantum realisation,
		\item the assemblage $\Ass{\vec{\A}}{\vec{\X}}$ is not almost-quantum. 
	\end{compactitem}
	As we will discuss in detail later, in \cite{sainz2015postquantum}, for instance, such an example of an assemblage was given. One way to understand this it that the quantum realisation of $\pr_{\vec{\A}\Bo|\vec{\X}\Y} $ may require a quantum system in Bob's wing that has larger dimension than that of $\cH_B$ (Bob's Hilbert space in the EPR scenario), hence explaining why the correlations admit a quantum realisation but the assemblage does not.
	From this fact it follows that
	there exist assemblages that generate almost-quantum correlations but are themselves not almost-quantum.

	The examples of such assemblages might come across as evidence that assemblages beyond the almost-quantum set satisfy the macroscopic non-contextuality principle. And this indeed highlights a deficiency of Def.~\ref{defn}: one needs additional information beyond the correlations $\pr_{\vec{\A}\Bo|\vec{\X}\Y}^{\text{\textbf{T}}} $ to fully flesh-out the structure of assemblages in EPR scenarios. See Sec.~\ref{sse:example} for a more comprehensive device-dependent discussion of the claim stated here.
	
	\subsection{The need for device-dependent principles}
	
	\quad The task that we set out to explore is that of characterising quantum assemblages, that is, a collection of quantum states. Hence, it comes as no surprise that one may require principles that go beyond the device-independent ones. In the last section, however, we saw how one particular device-dependent way in which device-independent principles may be applied to assemblages failed at answering our question. Here we briefly discuss other device-dependent considerations one may leverage.
	
	One of the captivating properties of post-quantum EPR inference is that it is a new post-quantum phenomenon in its own right, which is not merely implied by post-quantumness in Bell scenarios. However, when it comes to characterising assemblages solely from device-independent principles this becomes a roadblock. More precisely, consider an assemblage $\Ass{\vec{\A}}{\vec{\X}}$, and a set of measurements $\{\{N_{b|y}\}_{b\in \Bo}\}_{y\in\Y}\}$ on $\cH_B$ for Bob (not necessarily tomographically complete), where $\Y$ is the set of classical labels for the measurement choices, and $\Bo$ the set of classical labels for the measurements' outcomes. One may compute the correlations
	\begin{align}
		\pr_{\vec{\A}\Bo|\vec{\X}\Y} := \{p(\vec{a}b|\vec{x}y) \equiv \tr{N_{b|y}\sigma_{a_1...a_\Np|x_1...x_\Np}}{} \ \vert \ y\in\Y \ ,\ b \in \Bo \ , \ \sigma_{a_1...a_\Np|x_1...x_\Np} \in \Ass{\vec{\A}}{\vec{\X}} \}\,.
	\end{align}
	Notice that the operational constraints in the scenario where the Alices and Bob generate the correlations $\pr_{\vec{\A}\Bo|\vec{\X}\Y}$ are consistent with those of a Bell scenario, and hence we can think of $\pr_{\vec{\A}\Bo|\vec{\X}\Y}$ as being `correlations in a Bell experiment'. Now, Refs.~\cite{sainz2015postquantum,sainz2018formalism} show that there exist post-quantum assemblages $\Ass{\vec{\A}}{\vec{\X}}$ that can only generate quantum correlations $\pr_{\vec{\A}\Bo|\vec{\X}\Y}$ in such a Bell scenario. Therefore, any principle imposed on $\Ass{\vec{\A}}{\vec{\X}}$ which only looks at the possible correlations $\pr_{\vec{\A}\Bo|\vec{\X}\Y}$ that it may generate \textit{without involving any information on how these correlations were generated}, will immediately be satisfied by some post-quantum assemblages, including assemblages with no almost-quantum realisation. 
	
	In the previous subsection we discussed a particular way in which device-dependent considerations can be brought into the picture to bound EPR assemblages. This approach, presented in Def.~\ref{defn}, complements these device-independent principles with the particular information on the dimension of Bob's system $\cH_B$ as well as the particular characterisation of his measurement device which is  also required to perform tomographically-complete measurements. What Def.~\ref{defn} does not explicitly state -- and which some may feel inclined to include it in retrospect -- is that when thinking of $\pr_{\vec{\A}\Bo|\vec{\X}\Y}^{\text{\textbf{T}}}$ as a `correlation in a Bell scenario' one can moreover require that the realisation of $\pr_{\vec{\A}\Bo|\vec{\X}\Y}^{\text{\textbf{T}}}$ -- be it quantum, almost-quantum, etc -- should have Bob's Hilbert space factorised from whatever object represents Alices' systems, and his quantum actions should act locally on $\cH_B$. In the next subsection we show how to apply this reasoning when asking how the macroscopic non-contextuality principle constrains EPR assemblages. 
	
	\subsection{A stronger device-dependent version of macroscopic non-contextuality}\label{sse:example}
	
	\quad We saw in Section \ref{se:4.1} how the minimalistic application of the macroscopic non-contextuality principle to assemblages via Def.~\ref{defn} fails at constraining the set of assemblages in a satisfactory way. The question we discuss here is how we can complement Def.~\ref{defn} with device-dependent requirements so that a more meaningful set of assemblages is singled out by the macroscopic non-contextuality principle. 
	
	The first consideration that stems from the EPR scenario is that Bob's Hilbert space is known and specified ($\cH_B$), hence one can argue that the required specific (such as quantum or almost-quantum) realisation of $\pr_{\vec{\A}\Bo|\vec{\X}\Y}^{\text{\textbf{T}}} $ does happen in a factorised Hilbert space $\cH:=\mathcal{K}\otimes\cH_B$. In the same way, a natural requirement is that Bob's measurement operators that realise the correlations only act on $\cH_B$. 
	
	A second consideration is about the Alices. Given the previous discussion, if one requires the correlations $\pr_{\vec{\A}\Bo|\vec{\X}\Y}^{\text{\textbf{T}}} $ to be realised in $\cH:=\mathcal{K}\otimes\cH_B$ one may also take a step forward demanding that the operations implemented by the Alices when realising the correlations only act on $\mathcal{K}$. While this is a more questionable assumption, it is worth considering, although it is not required for the proof of our following claim.
	
	\begin{prop}\label{conj}
		Consider an EPR scenario $(\vec{\X},\vec{\A},\cH_B)$, and an assemblage $\Ass{\vec{\A}}{\vec{\X}}$ representing the knowledge on the state of Bob's system. Let $\pr_{\vec{\A}\Bo|\vec{\X}\Y}^{\text{\textbf{T}}} $ be the correlation associated to $\Ass{\vec{\A}}{\vec{\X}}$ when Bob performs tomographically-complete measurements on the latter. Assume that:
		\begin{compactitem}
			\item $\pr_{\vec{\A}\Bo|\vec{\X}\Y}^{\text{\textbf{T}}} $ is an almost-quantum correlation, that is, that $\Ass{\vec{\A}}{\vec{\X}}$ satisfies the macroscopic non-contextuality principle as per Def.~\ref{defn},
			\item the almost-quantum realisation of $\pr_{\vec{\A}\Bo|\vec{\X}\Y}^{\text{\textbf{T}}} $ is achieved in a factorised Hilbert space $\cH:=\mathcal{K}\otimes\cH_B$ by measurements $\{\Pi_{b|y}\}_{b\in\mathbb{B},y\in\Y}$ for Bob with the form $\Pi_{b|y}=\mathbbm{1}_{\mathcal{K}}\otimes\Pi^{TC}_{b|y}$, $\forall b\in\mathbb{B},y\in\Y$.
		\end{compactitem}
		Then, $\Ass{\vec{\A}}{\vec{\X}}$ is an almost-quantum assemblage. 
	\end{prop}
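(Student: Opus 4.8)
The plan is to start from an almost-quantum realisation of the tomographic correlation $\pr_{\vec{\A}\Bo|\vec{\X}\Y}^{\text{\textbf{T}}}$ and extract from it an almost-quantum moment matrix for the assemblage $\Ass{\vec{\A}}{\vec{\X}}$, in the sense of Def.~\ref{def:aqassemmm}, using the characterisation of almost-quantum correlations in terms of a Bell-scenario moment matrix $\Gamma^{\mathrm{B}}_{\mathrm{AQ}}$. By hypothesis there is a Hilbert space $\cH = \mathcal{K} \otimes \cH_B$, a state $\ket{\psi} \in \cH$, projectors $\{\Pi^{(i)}_{a_i|x_i}\}$ for the Alices and $\{\Pi_{b|y} = \id_{\mathcal{K}} \otimes \Pi^{TC}_{b|y}\}$ for Bob, all commuting appropriately, realising $\pr_{\vec{\A}\Bo|\vec{\X}\Y}^{\text{\textbf{T}}}$ as an almost-quantum correlation; equivalently there is an almost-quantum Bell moment matrix whose rows/columns are labelled by the words of $\mathcal{S}_{\mathrm{AQ}}$ built over the joint party set consisting of the $\Np$ Alices together with Bob, and whose $(\emptyset, \mathbf{a}_S\mathbf{x}_S\, b|y)$ entries reproduce the marginals of $\pr_{\vec{\A}\Bo|\vec{\X}\Y}^{\text{\textbf{T}}}$.

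The key construction is to define a candidate EPR moment matrix $\Gamma^{\mathrm{EPR}}_{\mathrm{AQ}}$ whose entries are $d\times d$ matrices by the recipe
\begin{align}
\Gamma^{\mathrm{EPR}}_{\mathrm{AQ}}(\mathbf{v},\mathbf{w}) := \tr{\, V_{\mathbf{v}}^\dagger V_{\mathbf{w}}\,}{\mathcal{K}}\,,
\end{align}
where $V_{\mathbf{w}} := \Pi_{\mathbf{w}} \otimes \id_B \ket{\psi}$ for $\mathbf{w} \in \mathcal{S}_{\mathrm{AQ}}$ (words over the Alices only, with $\Pi_\emptyset = \id$), so that the partial trace over $\mathcal{K}$ leaves an operator on $\cH_B$. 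Positivity (Eq.~\eqref{eq:28}) follows because $\Gamma^{\mathrm{EPR}}_{\mathrm{AQ}} = \tr{W^\dagger W}{\mathcal{K}}$ for the column vector $W$ with blocks $V_{\mathbf{w}}$, and partial trace is completely positive, exactly as in the GHJW proof above. The symmetry condition Eq.~\eqref{eq:29} and the nullity condition Eq.~\eqref{eq:30} follow from the corresponding properties of the $\Pi^{(i)}_{a_i|x_i}$ as genuine projective measurements (orthogonality of outcomes, idempotency) together with the permutation/commutation relation satisfied by $\ket{\psi}$; the normalisation Eq.~\eqref{eq:thetrace1} is $\tr{\Gamma^{\mathrm{EPR}}_{\mathrm{AQ}}(\emptyset,\emptyset)}{} = \langle\psi|\psi\rangle = 1$. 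It then remains to check Eqs.~\eqref{eq:31}–\eqref{eq:32}: that $\Gamma^{\mathrm{EPR}}_{\mathrm{AQ}}(\emptyset,\emptyset) = \rho_R$ and $\Gamma^{\mathrm{EPR}}_{\mathrm{AQ}}(\emptyset,\mathbf{a}_S|\mathbf{x}_S) = \sigma_S(\mathbf{a}_S|\mathbf{x}_S)$, i.e.\ that this moment matrix actually reproduces the assemblage we started with.

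This last point is where the hypotheses on the factorised form of Bob's measurements do the real work, and it is the step I expect to be the main obstacle. The subtlety is that a priori the construction above only uses the Alices' operators and $\ket{\psi}$; one must argue that $\tr{V_\emptyset^\dagger V_{\mathbf{a}_S\mathbf{x}_S}}{\mathcal{K}}$ equals the true marginal assemblage element $\sigma_S(\mathbf{a}_S|\mathbf{x}_S)$ of $\Ass{\vec{\A}}{\vec{\X}}$, not merely some operator with the right statistics against $\text{TC}_\text{B}$. Here is where tomographic completeness enters: two $d\times d$ operators that yield the same expectation values against every element of a tomographically-complete set $\text{TC}_\text{B}$ must be equal. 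So the argument is: the Bell moment matrix realising $\pr_{\vec{\A}\Bo|\vec{\X}\Y}^{\text{\textbf{T}}}$ has entries $\Gamma^{\mathrm{B}}_{\mathrm{AQ}}(b|y, \mathbf{a}_S\mathbf{x}_S) = \tr{\Pi^{TC}_{b|y}\, \tr{V_\emptyset^\dagger V_{\mathbf{a}_S\mathbf{x}_S}}{\mathcal{K}}}{}$ by the assumed factorised form $\Pi_{b|y} = \id_{\mathcal{K}}\otimes\Pi^{TC}_{b|y}$, and this must match $p(\mathbf{a}_S b|\mathbf{x}_S y) = \tr{\Pi^{TC}_{b|y}\, \sigma_S(\mathbf{a}_S|\mathbf{x}_S)}{}$ from Eq.~\eqref{eq:tccorr}; since this holds for all $b|y$ in a tomographically-complete set and both sides are linear in $\Pi^{TC}_{b|y}$, we conclude $\tr{V_\emptyset^\dagger V_{\mathbf{a}_S\mathbf{x}_S}}{\mathcal{K}} = \sigma_S(\mathbf{a}_S|\mathbf{x}_S)$, and likewise $\Gamma^{\mathrm{EPR}}_{\mathrm{AQ}}(\emptyset,\emptyset) = \rho_R$. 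Putting these together, $\Gamma^{\mathrm{EPR}}_{\mathrm{AQ}}$ is a valid almost-quantum EPR moment matrix for $\Ass{\vec{\A}}{\vec{\X}}$, so the assemblage is almost-quantum. The only points needing care beyond this are that the words appearing in $\mathcal{S}_{\mathrm{AQ}}$ for the joint Alice-plus-Bob scenario restrict correctly to the Alice-only $\mathcal{S}_{\mathrm{AQ}}$ used for assemblages, and that the permutation relation for $\ket{\psi}$ among the Alices' projectors (which is part of the characterisation in \cite[Lemma 16]{hoban2018channel}) is available — both of which follow from the structure of almost-quantum moment matrices rather than requiring new input.
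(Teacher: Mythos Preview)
Your argument is correct and follows essentially the same route as the paper's proof: both construct the EPR moment matrix as $\Gamma_{\mathrm{AQ}}^{\mathrm{EPR}}(\mathbf{v},\mathbf{w})=\tr{\Pi_{\mathbf{v}}^{\dagger}\Pi_{\mathbf{w}}\ket{\psi}\bra{\psi}}{\mathcal{K}}$ from the assumed factorised almost-quantum realisation, obtain positivity from the Gram/partial-trace structure, and identify the $(\emptyset,\mathbf{a}_S|\mathbf{x}_S)$ blocks with the assemblage elements via tomographic completeness of $\{\Pi^{TC}_{b|y}\}$. The paper phrases the last step as applying a tomographic reconstruction isomorphism $\mathcal{T}$ to the Bob-blocks $\Lambda_{\mathbf{v},\mathbf{w}}$ of the Bell moment matrix, but since it then invokes the explicit realisation anyway to establish positivity, the two presentations coincide.

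One small correction: you write $V_{\mathbf{w}}=\Pi_{\mathbf{w}}\otimes\id_{B}\ket{\psi}$, implicitly assuming the Alices' projectors act on $\mathcal{K}$ alone. The proposition does \emph{not} assume this (the paper even remarks just before the statement that this extra assumption ``is not required for the proof''); only Bob's measurements are required to factor as $\id_{\mathcal{K}}\otimes\Pi^{TC}_{b|y}$. Your argument goes through unchanged if you simply let $\Pi_{\mathbf{w}}$ act on all of $\cH=\mathcal{K}\otimes\cH_{B}$ and drop the $\otimes\,\id_{B}$: the partial trace over $\mathcal{K}$ still produces a $d\times d$ matrix, and the identity $\tr{\Pi^{TC}_{b|y}\,\tr{\Pi_{\mathbf{a}_S|\mathbf{x}_S}\ket{\psi}\bra{\psi}}{\mathcal{K}}}{}=\tr{(\id_{\mathcal{K}}\otimes\Pi^{TC}_{b|y})\Pi_{\mathbf{a}_S|\mathbf{x}_S}\ket{\psi}\bra{\psi}}{}=p(\mathbf{a}_S b|\mathbf{x}_S y)$ uses only the factorisation of Bob's operators and their commutation with the Alices'.
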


	\begin{proof}
		Consider the almost quantum certificate $\Gamma_{\mathrm{AQ}}^{\mathrm{B}}$ for the correlations $\pr_{\vec{\A}\Bo|\vec{\X}\Y}^{\text{\textbf{T}}}$. Denoting by $\mathcal{S}_{\mathrm{AQ}}$ the almost quantum set of words  for all black-box parties' labels, it is the case that $\Gamma_{\mathrm{AQ}}^{\mathrm{B}}\in M_{|\mathcal{S}_{\mathrm{AQ}}|\cdot|\Bo|\cdot|\Y|}(\mathbb{R})$, where $M_n(\mathbb{R})$ is the space of $n\times n$ matrices with real elements. Furthermore, there is an isomorphism between this space and $M_{|\mathcal{S}_{\mathrm{AQ}}|}(M_{|\Bo|\cdot|\Y|}(\mathbb{R}))$, i.e., the space of $|\mathcal{S}_{\mathrm{AQ}}|\times|\mathcal{S}_{\mathrm{AQ}}|$ matrices with elements from $M_{|\Bo|\cdot|\Y|}(\mathbb{R})$. From this, we can represent $\Gamma_{\mathrm{AQ}}^{\mathrm{B}}$ as
		\begin{equation}
			{\Gamma}_{\mathrm{AQ}}^{\mathrm{B}}=\sum_{\mathbf{v},\mathbf{w}\in\mathcal{S}_{\mathrm{AQ}}}\ket{\mathbf{v}}\bra{\mathbf{w}}\otimes\Lambda_{\mathbf{v},\mathbf{w}}\,,
		\end{equation}
		where each $\Lambda_{\mathbf{v},\mathbf{w}}\in M_{|\Bo|\cdot|\Y|}(\mathbb{R})$ is a submatrix of $\Gamma_{\mathrm{AQ}}^{\mathrm{B}}$. Notice, moreover, that when $\mathbf{w}^\dagger\mathbf{v} \in \mathcal{S}_{\mathrm{AQ}}$, $\Lambda_{\mathbf{v},\mathbf{w}}$ contains the probabilities and correlations of Bob's tomographic measurements all conditioned to the labels $\mathbf{w}^\dagger\mathbf{v}$ of the black-box parties.
		
		Because we know that Bob's measurements act only on $\cH_B$ in the factorized Hilbert space, and because they are tomographically complete, there is an isomorphism between the space $M_{|\Bo|\cdot|\Y|}(\mathbb{R})$ and Bob's Hilbert space $\cH_B$\footnote{If the Hilbert space were not factorized, or if Bob's measurements wouldn't act on $\cH_B$ only, the tomographic reconstruction could yield non-physical states. In other words, the isomorphism is in fact between a subset of $M_{|\Bo|\cdot|\Y|}(\mathbb{R})$, whose entries are correlations arising from Bob's measurements acting on $\cH_B$, and $\cH_B$ itself.}, i.e., there is a bijective, positive map $\mathcal{T}(\cdot)$ such that:
		\begin{align}
			\mathcal{T}(\Lambda_{\emptyset,\emptyset})=\sigma_R\,;\\
			\mathcal{T}(\Lambda_{\emptyset,\mathbf{v}})=\mathcal{T}(\Lambda_{\mathbf{v},\mathbf{v}}) =\sigma_{\mathbf{v}}\,.
		\end{align}
		These elements ($\sigma_R$ and $\sigma_{\mathbf{v}}$) are positive semidefinite by definition, since the corresponding $\Lambda_{\emptyset,\emptyset}$ and $\Lambda_{\mathbf{v},\mathbf{v}}$ are positive semidfinite themselves (the latter follows from Sylverster's criterion, since $\Lambda_{\emptyset,\emptyset}$ and $\Lambda_{\mathbf{v},\mathbf{v}}$ are  principal submatrices of $\Gamma_{\mathrm{AQ}}^{\mathrm{B}}$ and hence are also positive semidefinite).\\ 
		
		From the linearity of $\mathcal{T}(\cdot)$, we can define
		\begin{equation}
			\Gamma_{\mathrm{AQ}}^{\mathrm{EPR}}:=(\mathbbm{1}\otimes\mathcal{T})[\tilde{\Gamma}_{\mathrm{AQ}}^{\mathrm{B}}]=\sum_{\mathbf{v},\mathbf{w}\in\mathcal{S}_{\mathrm{AQ}}}\ket{\mathbf{v}}\bra{\mathbf{w}}\otimes\mathcal{T}(\Lambda_{\mathbf{v},\mathbf{w}})\,.
		\end{equation}
		This matrix has labels and columns completely specified by the words in $\mathcal{S}_{\mathrm{AQ}}$, each entry being a matrix in $\cH_B$. It also has the following properties:
		\begin{align}
			&\Gamma_{\mathrm{AQ}}^{\mathrm{EPR}} (\mathbf{v},\mathbf{w}) = \Gamma_{\mathrm{AQ}}^{\mathrm{EPR}} (\mathbf{v'},\mathbf{w'}) \quad \text{if} \quad \mathbf{v}^\dagger\mathbf{w} = \mathbf{v'}^\dagger\mathbf{w'} \,,\\ 
			&\tr{\Gamma_{\mathrm{AQ}}^{\mathrm{EPR}} (\emptyset,\emptyset)}{} = 1 \,, \\ 
			&\Gamma_{\mathrm{AQ}}^{\mathrm{EPR}} (\mathbf{v},\mathbf{w}) = \mathbf{0} \quad \text{if} \quad \mathbf{v}^\dagger\mathbf{w} \quad \text{is null} \,,
		\end{align}
		where $\mathbf{0}$ is a $d\times d$ matrix with null elements, $d$ being the dimension of $\cH_B$. The first property follows from the fact that $\Lambda_{\mathbf{v},\mathbf{w}}=\Lambda_{\mathbf{v}',\mathbf{w}'}$ when $\mathbf{v}^\dagger\mathbf{w}=\mathbf{v'}^\dagger\mathbf{w'}$ and this equality is preserved by the isomorphism. The second comes from the constraint that $\Lambda_{\emptyset,\emptyset}$ has as entries $\trace{(\mathbbm{1}_{\mathcal{K}}\otimes\Pi_{b|y}^{TC})\ket{\psi}\bra{\psi}}$, the tomographic data that is mapped into Bob's partial state $\tr{\ket{\psi}\bra{\psi}}{\mathcal{K}}$. The third arises from $\Lambda_{\mathbf{v},\mathbf{w}}=\mathbf{0}$ whenever $\mathbf{v}^\dagger\mathbf{w}$ is null and the linearity of the isomorphism.\\
		
		To finally show that $\Gamma_{\mathrm{AQ}}^{\mathrm{EPR}}\geq0$, notice that there is a state $\ket{\psi}\in\mathcal{K}\otimes\cH_B$ and projectors $\{\Pi^{(i)}_{a_i|x_i}\}_{a\in\A,x\in\X,i=1,...,\Np}$ acting on this Hilbert space, such that
		\begin{equation}
			\mathcal{T}(\Lambda_{\mathbf{v},\mathbf{w}})=\tr{\left(\prod_{(a_i|x_i)\in\mathbf{v}}\Pi_{a_i|x_i}^{(i)}\right)^\dagger\prod_{(a'_j|x'_j)\in\mathbf{w}}\Pi_{a'_j|x'_j}^{(j)}\ket{\psi}\bra{\psi}}{\mathcal{K}}.
		\end{equation}
		The entry $\Gamma_{\mathrm{AQ}}^{\mathrm{EPR}}(\mathbf{v}_l,\mathbf{w}_m)$, where $l,m$ are labels of the rows and columns in $\cH_B$, can be written as
		\begin{equation}
			\Gamma_{\mathrm{AQ}}^{\mathrm{EPR}}(\mathbf{v}_l,\mathbf{w}_m)=\sum_{\ket{k}\in\mathcal{K}}\bra{k}\braket{l|\prod_{(a'_j|x'_j)\in\mathbf{w}}\Pi_{a'_j|x'_j}^{(j)}\ket{\psi}\bra{\psi}\left(\prod_{(a_i|x_i)\in\mathbf{v}}\Pi_{a_i|x_i}^{(i)}\right)^\dagger|k}\ket{m}\,,
		\end{equation}
		or reorganizing and including an identity $\sum_{\ket{k'}\in\mathcal{K}}\ket{k'}\bra{k'}$,
		\begin{equation}
			\Gamma_{\mathrm{AQ}}^{\mathrm{EPR}}(\mathbf{v}_l,\mathbf{w}_m)=\sum_{\ket{k}\in\mathcal{K}}\braket{\psi|\left(\prod_{(a_i|x_i)\in\mathbf{v}}\Pi_{a_i|x_i}^{(i)}\right)^\dagger|k}\ket{m}\bra{k}\cdot\sum_{\ket{k'}\in\mathcal{K}}\bra{k'}\braket{l|\prod_{(a'_j|x'_j)\in\mathbf{w}}\Pi_{a'_j|x'_j}^{(j)}|\psi}\ket{k'},
		\end{equation}
		which can be rewritten as
		\begin{equation}
			\Gamma_{\mathrm{AQ}}^{\mathrm{EPR}}(\mathbf{v}_l,\mathbf{w}_m)=\braket{\mathbf{v}_l|\mathbf{w}_m}\,.
		\end{equation}
		It is thus possible to define a matrix $V$ whose columns are the vectors $\ket{\mathbf{v}_l}$, $\forall \mathbf{v}\in\mathcal{S}_{\mathrm{AQ}}$ and $l=1,...,\mbox{dim}\cH_B$, such that
		\begin{equation}
			\Gamma_{\mathrm{AQ}}^{\mathrm{EPR}}=V^\dagger V,
		\end{equation}
		and therefore, positive semidefinite.\\
	\end{proof}
	
	\section{Discussion}
	
	\quad In this work we considered the question of how to characterise a natural set of assemblages from basic physical principles in EPR scenarios -- a question that has been vastly explored in the device-independent framework for correlations in Bell scenarios. The hope is that if a wise set of principles is identified, the natural set of assemblages that emerges will be that of the quantum-realisable ones, gaining a valuable physical intuition behind quantum assemblages. For the case of correlations in Bell scenarios this question still remains open, and there is no reason a priori to expect that it becomes less difficult when asked for assemblages in EPR scenarios.
	
	In this work we explored a particular avenue towards characterising EPR assemblages: the idea is to map an assemblage $\Ass{\vec{\A}}{\vec{\X}}$ into a conditional probability distribution $\pr_{\vec{\A}\Bo|\vec{\X}\Y}^{\text{\textbf{T}}}$ in a particular Bell scenario by letting the characterised party (Bob) perform a set of tomographically-complete measurements on their quantum system. We hence explore how the assemblages $\Ass{\vec{\A}}{\vec{\X}}$ are constrained by imposing device-independent principles on their corresponding correlations $\pr_{\vec{\A}\Bo|\vec{\X}\Y}^{\text{\textbf{T}}}$. The motivation is that, since Bob performs tomographically complete measurements, all the information about the assemblage is encoded in $\pr_{\vec{\A}\Bo|\vec{\X}\Y}^{\text{\textbf{T}}}$, hence one would expect that exploring the characterisations of the latter would be sufficient. However, we find that the tensor-product structure of the joint Hilbert space underpinning the EPR experiment -- in particular, that Bob's Hilbert space is tensored to that of the Alices -- is not a property one may capture by merely studying the correlations $\pr_{\vec{\A}\Bo|\vec{\X}\Y}^{\text{\textbf{T}}}$, and hence the need for truly device-dependent principles for characterising EPR assemblages emerges.
	
	To explore the extent to which our particular avenue may characterise EPR assemblages from their correlations, we focused on the case of correlations $\pr_{\vec{\A}\Bo|\vec{\X}\Y}^{\text{\textbf{T}}}$ that satisfy the macroscopic non-contextuality principle. Here we showed that almost-quantum assemblages satisfy the macroscopic non-contextuality principle -- in other words, the correlations $\pr_{\vec{\A}\Bo|\vec{\X}\Y}^{\text{\textbf{T}}}$ that arise from almost-quantum assemblages satisfy macroscopic non-contextuality. The converse statement -- that assemblages which satisfy macroscopic non-contextuality are almost-quantum -- needs to be formalised in more detail since, as we argued before, the tensor-product structure of the Hilbert space into the product of the parties' needs to be assumed. Grounded in this additional assumption, we further showed that an assemblage satisfies the macroscopic non-contextuality principle only if it is an almost-quantum assemblage.
	
	Our proof techniques involve EPR inferential tools that are still novel, and we hope our work highlights their usefulness for future research in the field. Looking ahead, we have taken the first steps toward tackling the characterisation of EPR inference, and this has open -- as is usually the case in foundational research -- the door to a plethora of questions, a couple of which are the following. 
	First, one may wonder whether stronger constraints on EPR assemblages can be found by relating them to the objects of study in other non-classicality experiments -- for example, correlations in contextuality experiments or network non-classicality rather than the Bell scenarios considered here. Alternatively, one can explore the more challenging avenue of characterising assemblages via device-dependent principles formulated directly for EPR scenarios, without having better-known phenomena (such as Bell non-classicality or contextuality) mediate between them. In this case, toy theories that include quantum systems -- such as Witworld \cite{cavalcanti2021witworld} -- may serve as a test-bed for the strength of the proposed principles. 
	
	\section*{Acknowledgments}
	
	\quad We thank Valerio Scarani for fruitful discussions during the Vienna Quantum Foundations Conference 2020 (hosted in September 2021). 
	VPR and ABS acknowledge support by the Foundation for Polish Science (IRAP project, ICTQT, contract no. 2018/MAB/5, co-financed by EU within Smart Growth Operational Programme). 
	MJH and ABS acknowledge the FQXi large grant ``The Emergence of Agents from Causal Order'' (FQXi FFF Grant number FQXi-RFP-1803B). 
	The diagrams within this manuscript were prepared using TikZit.
	
	\bibliography{MUJPA}
	\bibliographystyle{unsrt}
	
\end{document}